\newtheorem{theorem}{Theorem}[section]
\newtheorem{lemma}[theorem]{Lemma}
\newtheorem{proposition}[theorem]{Proposition}
\newtheorem{corollary}{Corollary}
\newtheorem*{claim}{Claim}
\newtheorem{Theorem}{Theorem}
\newtheorem*{Theorem*}{Theorem}
\theoremstyle{definition}
\newtheorem{definition}{Definition}[section]
\newtheorem{Definition}{Definition} 
\newtheorem{example}{Example}
\newtheorem*{Definition*}{Definition}
\newtheorem*{example*}{Example}
\theoremstyle{remark}
\newtheorem{remark}{Remark}
\numberwithin{equation}{section}
\begin{document}

\title{Tulczyjew's Approach for Particles in Gauge Fields}
\author{Guowu Meng}

\address{Department of Mathematics, Hong Kong University of Science and Technology, Clear Water Bay, Kowloon, Hong Kong}

\email{mameng@ust.hk}

\thanks{The author is supported by Hong Kong University of Science and Technology under DAG S09/10.SC02.}
  

\date{May 1, 2014}


\maketitle
\begin{abstract}
In mid-1970s W. M. Tulczyjew discovered an approach to classical mechanics which brings the Hamiltonian formalism and the Lagrangian formalism under a common geometric roof: the dynamics of a particle with configuration space $X$ is determined by a Lagrangian submanifold $D$ of $TT^*X$ (the total tangent space of $T^*X$), and the description of $D$ by its Hamiltonian $H$: $T^*X\to \mathbb R$ (resp. its Lagrangian $L$: $TX\to\mathbb R$) yields the Hamilton (resp. Euler-Lagrange) equation. 

It is reported here that Tulczyjew's approach also works for the dynamics of (charged) particles in gauge fields, in which the role of the total cotangent space $T^*X$ is played by Sternberg phase spaces. In particular, it is shown that, for a particle in a gauge field, the equation of motion can be locally presented as the Euler-Lagrange equation for a Lagrangian which is the sum of the ordinary Lagrangian $L(q, \dot q)$, the Lorentz term, and an {\em extra new term} which vanishes whenever the gauge group is abelian.  A charge quantization condition is also derived, generalizing Dirac's charge quantization condition from $\mathrm{U}(1)$ gauge group to any compact connected gauge group.
\end{abstract}

\tableofcontents
 \section{Introduction}
Tulczyjew's unified approach \cite{Tulczyjew1974} to Lagrangian and Hamiltonian descriptions of particle dynamics, though not well-known outside a small circle of mathematicians and physicists, is quite appealing to geometry-oriented minds. In this approach, the Legendre transformation takes a specially simple interpretation, and systems with singular Lagrangians or subject to constraints appear naturally. An advantage of this geometric approach is its flexibility in the sense that it can be easily adapted to different settings. As one more demonstration of its flexibility, in this article it is shown that Tulczyjew's approach also works for particle dynamics in which a charged particle moves in the presence of an external gauge field, either abelian or non-abelian. 

The incorporation of a gauge field into the classical particle dynamics is a nontrivial business when the gauge group is non-abelian.  It seems that, in the setting of symplectic geometry, this was initially done by S. Sternberg \cite{Sternberg77}.  Some further elaborations, especially the one on the relationship of Sternberg's symplectic approach with the earlier Poisson approach of S. K. Wong \cite{Wong70}, came from A. Weinstein \cite{Weinstein78} and R. Montgomery \cite{Montgomery84}. 

For readers who are familiar with the notion of double vector bundle in the sense of Pradines \cite{Pradines68} (or equivalently in the sense of Grabowski and Rotkiewicz \cite{Grabowski2009}) and closely-related notion of double Lie algebroid in the sense of K. Mackenzie \cite{Mackenzie1992} (or equivalently in the sense of T. Voronov \cite{voronov2012}), 
it is worth to remark that, just as Tulczyjew's original approach to particle dynamics, our extension to dynamics of charged particle rests on the following mathematical fact \cite{Dufour1991}: {\em a real vector bundle $E\to X$ and its dual vector bundle $E^*\to X$ are not canonically isomorphic, but their associated double vector bundles 
$(T^*E^*; E^*, E^{**}; X)$ and $(T^*E; E^*, E; X)$  are}.  

All ingredients involved in the present work have already appeared in the literature, but they will be reviewed here for completeness. It should be pointed out that, the main result obtained here would appear earlier (especially in Ref. \cite{Grabowska2006}) if enough attention was paid to Sternberg phase spaces. After communicating with J. Grabowski and P. Urba{\'n}ski, the author learned that there had been an approach to electrically charged particles based on affine geometries \cite{tulczyjew1992affine, urbanski2003affin, grabowska2004av}; this is associated with $\mathbb R$-principal bundles, but a reasonable extension to non-abelian gauge fields may also be possible.

\vskip 10pt
In Section \ref{Sternberg phase space} a detailed review of Sternberg phase space \cite{Sternberg77} is presented. For our purpose,  the presentation given here focuses more on the explicit local computations. In Section \ref{canonical isomorphism} a detailed review of the canonical isomorphism $T^*E^*\cong T^*E$ for any real vector bundle $E\to X$ is presented. This isomorphism is more natural when it is viewed as the canonical double vector bundle isomorphism from $(T^*E^*; E^*, E^{**}; X)$ onto $(T^*E; E^{*}, E; X)$; moreover, it enables Tulczyjew to bring the Lagrangian description of classical mechanics to the domain of symplectic geometry and unify Lagrangian and Hamiltonian formalisms of classical mechanics under a single geometric roof. In Section \ref{Special symplectic manifold} we review Tulczyjew's notion of {\em special symplectic manifold}, a concept more refined than that of symplectic manifold. In Section \ref{Tulczyjew triple} we first review the classical Tulczyjew triple used in the Tulczyjew's approach to particle dynamics and then introduce its magnetized version where Sternberg phase spaces play the role of $T^*X$. Section \ref{Tulczyjew's approach} contains new results. Here, the Hamiltonian side of Tulczyjew's approach, taken from Ref. \cite{Tulczyjew1974}, works for arbitrary symplectic manifold $M$, not just for the Sternberg phase space.  The magnetized version of Tulczyjew triple introduced in Section \ref{Tulczyjew triple} enables us to work out the Lagrangian side of Tulczyjew's approach to particles in gauge fields. In particular, for particles in Yang-Mills fields, a Lagrangian approach in the usual textbook way is derived, for which the Lagrangian is the sum of the ordinary Lagrangian $L(q, \dot q)$, the Lorentz term, and an {\em  extra new term}. A generalized version of Dirac's {\em charge quantization condition} \cite{Dirac1931} is also derived from this Lagrangian approach. (Please consult the Main Theorem on page \pageref{main theorem} for precise statement.) As far as this author knows, the aforementioned Lagrangian and charge quantization condition are new; cf. Ref. \cite{Balachandran1977} and example \ref{Wong's equations} on page \pageref{Wong's equations}.

For the convenience of readers, in the appendix we list some symbols used in this article, along with technically useful facts on tangent lift operator and on fiber bundle equipped with a connection. Most of symbols used in this article are quite standard in the mathematical literature, for example, $\tau_X$: $TX\to X$ denotes the tangent bundle of $X$, $\pi_X$: $T^*X\to X$ denotes the cotangent bundle of $X$, $\vartheta_X$ denotes the Liouville form on $T^*X$ and $\omega_X:=\mathrm{d}\vartheta_X$ denotes the tautological symplectic form on $T^*X$. Also, {\em for notational sanity in this article, we shall use the same notation for both a differential form (or a map) and its pullback under a fiber bundle projection map}. For example, for a differential form $\Omega$ on manifold $F$, its pullback under projection $X\times F\to F$ is also denoted by $\Omega$. 

\vskip 10pt
{\bf Acknowledgements.} The author learned Tulczyjew's elegant approach from Janusz Grabowski at the recent workshop on Geometry of Mechanics and Control Theory (India Institute of Sciences, Bangalore, India, January 2 - 10, 2014). Besides thanking Janusz Grabowski for his beautiful talk, he would also like to thank Partha Guha for organizing the wonderful workshop. Finally, he would like to thank John Baez, Janusz Grabowski, Jim Stasheff and Pawel Urba{\'n}ski for providing either valuable comments or additional references. 

\section{Sternberg phase space}\label{Sternberg phase space}
Throughout this section we assume that $X$ is a manifold, $G$ is a compact connected Lie group with Lie algebra $\mathfrak g$, $P\to X$ is a principal $G$-bundle with a fixed principal connection form $\Theta$, and $F$ is a hamiltonian $G$-space with symplectic form $\Omega$ and (equivariant) moment map $\Phi$. Recall that we shall use the same notation for both a differential form (or a map) and its pullback under a fiber bundle projection map. For example, the pullback of the moment map $\Phi$ under projection $X\times F\to F$ is also denoted by $\Phi$. 

Let $\mathcal F:=P\times_G F$ and $\mathcal F^\sharp$ be the pullback of diagram
\begin{eqnarray}\label{D: pullback}
\begin{tikzcd}[column sep=small]
 & &  \mathcal F\arrow{d}\\
T^*X\arrow{rr}{\pi_X}& &X
\end{tikzcd}\nonumber
\end{eqnarray}
Sternberg observed that \cite{Sternberg77}, with the above data, there is a correct substitute $\Omega_\Theta$ on $\mathcal F$ for $\Omega$ on $X\times F$, in the sense that $\Omega_\Theta$ is a closed real differential two-form on $\mathcal F$ and it is equal to $\Omega$ when
$P\to X$ is a trivial bundle with the product connection. He further observed that, if $\omega_X$ denote the canonical symplectic form on $T^*X$, then
$$
\omega_X+\Omega_\Theta
$$ is a symplectic form on $\mathcal F^\sharp$ --- the {\bf Sternberg symplectic form}. 

\vskip 10pt
To describe the Sternberg symplectic form, we need to do some preparations. For $a\in G$, the right action of $a$ on $P$ is denoted by $R_a$ and the adjoint action of $a$ on $\mathfrak g$ is denoted by ${\mathrm {Ad}}_a$. For $\xi\in\mathfrak g$, the infinitesimal right action of $\xi$ on $P$ is a vector field on $P$ and shall be denoted by $X_\xi$. Since $G$ is a compact connected Lie group, we can assume that it is a Lie subgroup of $\mathrm{SO}(N)$ for some positive integer $N$. Let us denote by $g$ the inclusion map of $G$ into the vector space of all real square matrices of order $N$. Note that, in terms of $g$, the Maurer-Cartan form on $G$ can be written as $g^{-1}\, \mathrm{d}g$, here, the product between $g^{-1}$ and $\mathrm{d}g$ is the matrix multiplication.

\subsection{Principal connection} Let $\Theta$ be a $\mathfrak g$-valued differential one-form on $P$. Then $\Theta$ is/defines a principal connection on the principal $G$-bundle $P\to X$ if it satisfies the following two conditions\footnote{Our $\Theta$ here is the negative of the $\Theta$ in Ref. \cite{Sternberg77} because $R_a$ is right multiplication by $a$ here and by $a^{-1}$ in Ref. \cite{Sternberg77}.}: 
\begin{center}
1) ${R_{a^{-1}}}^*\,\Theta ={\mathrm {Ad}}_a\Theta$ for any $a\in G$, \quad\quad 2) $\Theta(X_\xi)=\xi$ for any $\xi\in \mathfrak g$.
\end{center}
Working locally, we may assume that $P\to X$ is trivial. Suppose that 
\begin{eqnarray}
\begin{tikzcd}[column sep=small]
X\times G \arrow{rr}{\phi}[swap]{\cong} \arrow{dr}& &  P\arrow{dl}\\
& M & 
\end{tikzcd} \nonumber
\end{eqnarray} is a trivialization, then, as a $\mathfrak g$-valued differential one-form on $X\times G$, 
\begin{eqnarray}
\fbox{$\phi^*\Theta=g^{-1}A_\phi g+g^{-1}\, \mathrm{d}g$}
\end{eqnarray}
for a unique $\mathfrak g$-valued differential one-form $A_\phi$ on $X$. Similarly, if $\phi'$ is another trivialization, we have
$$
{\phi'}^*\Theta=g^{-1}A_{\phi'}g+g^{-1}\, {\mathrm d}g
$$ for a unique $\mathfrak g$-valued differential one-form $A_{\phi'}$ on $X$. To see how $A_\phi$ and $A_{\phi'}$ are related, we note that the bundle isomorphism $\lambda$ defined by the commutative triangle
\begin{eqnarray}
\begin{tikzcd}[column sep=small]
& P  &\\
X\times G  \arrow{ur}{\phi}  \arrow{rr}{\lambda}  & &  X\times G  \arrow{ul}[swap]{\phi'} 
\end{tikzcd}\nonumber
\end{eqnarray} can be written as  $\lambda(x, b)=(x, a(x)b)$ for a unique smooth map $a$: $X\to G$. Since $\phi^*\Theta=\lambda^*{\phi'}^*\Theta$, we have
\begin{eqnarray}\label{gaugeT}
A_\phi=a^{-1}A_{\phi'}a+a^{-1}\, {\mathrm d}a\quad\mbox{or}\quad \fbox{$A_{\phi'}=aA_\phi a^{-1}+a\, {\mathrm d}a^{-1}$}.
\end{eqnarray}

\subsection{Sternberg form $\Omega_\Theta$ on $\mathcal F$}
For the hamiltonian $G$-space $F$, recall that $\Omega$ is its symplectic form and $\Phi$: $F\to\mathfrak g^*$ is its moment map. We let $Y_\xi$ be the vector field on $F$ which represents the infinitesimal left action of $\xi\in \mathfrak g$ on $F$. Then $\Omega$ is invariant under the $G$-action on $F$, $\Phi$ is $G$-equivariant, and  
\begin{eqnarray}\label{phi1}
Y_\xi  \lrcorner\,\Omega=\langle\xi, {\mathrm d}\Phi\rangle\quad\mbox{for any $\xi\in\mathfrak g$}.
\end{eqnarray}
Here $\lrcorner$ denotes the interior product and  $\langle\, ,\, \rangle$ denotes the paring of elements in $\mathfrak g$ with elements in $\mathfrak g^*$. In view of the fact that $\Phi$ is $G$-equivariant, Eq. \eqref{phi1} implies that
\begin{eqnarray}\label{phi2}
\Omega(Y_{\xi_1}, Y_{\xi_2})=\langle [\xi_1, \xi_2], \Phi\rangle \quad\mbox{for any $\xi_1, \xi_2\in\mathfrak g$}.
\end{eqnarray} 

Denote by $\phi_F$ the composition map $X\times F\cong (X\times G)\times_G F\buildrel \phi\times_G F\over \longrightarrow  P\times_G F=:\mathcal F$. If we let $\lambda_F$ be the fiber bundle isomorphism in the commutative triangle 
\begin{eqnarray}
\begin{tikzcd}[column sep=small]
& \mathcal F &\\
X\times F \arrow{rr}{\lambda_F}  \arrow{ur}{\phi_F} & &  X\times F  \arrow{ul}[swap]{\phi'_F} , 
\end{tikzcd}\nonumber
\end{eqnarray}
then $\lambda_F(x, f)=(x, a(x)\cdot f)$. Let 
\[\fbox{$\Omega_\phi:=\Omega-{\mathrm d}\langle A_\phi, \Phi\rangle$}.\] 
The following lemma implies that there is a well-defined closed real differential two-form $\Omega_\Theta$ on $ \mathcal F$, referred to as the {\bf Sternberg form} on $\mathcal F$, such that $\Omega_\phi=\phi_F^*\Omega_\Theta$. 
\begin{lemma}
With the notations as above, we have $\lambda_F^*\, \Omega_{\phi'}=\Omega_\phi$. Consequently, as closed real differential two-forms on $\mathcal F$, $({\phi_F}^{-1})^*\Omega_\phi=({\phi'_F}^{-1})^*\Omega_{\phi'}$. 
\end{lemma}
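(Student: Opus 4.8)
The plan is to exploit the defining formula $\Omega_{\phi'}=\Omega-\mathrm{d}\langle A_{\phi'},\Phi\rangle$ together with the fact that pullback commutes with $\mathrm{d}$, so that it suffices to compute the two pullbacks $\lambda_F^*\Omega$ and $\lambda_F^*\langle A_{\phi'},\Phi\rangle$ separately and then reassemble. Writing $\theta:=a^{-1}\mathrm{d}a$ for the pullback of the Maurer--Cartan form along $a\colon X\to G$, the gauge transformation law \eqref{gaugeT} can be rewritten as $A_{\phi'}=\mathrm{Ad}_a(A_\phi-\theta)$, which is the form in which I would feed it into the computation.

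The heart of the matter is $\lambda_F^*\Omega$. Since $\lambda_F(x,f)=(x,a(x)\cdot f)$ twists the $F$-fibre by the point-dependent group element $a(x)$, and $\Omega$ is invariant only under the action of a \emph{fixed} group element, an anomaly appears. First I would differentiate the map $\Psi(x,f):=a(x)\cdot f$: splitting a tangent vector as $(v,w)\in T_xX\oplus T_fF$, one finds $\mathrm{d}\Psi(v,w)=(\rho_{a(x)})_*\bigl(Y_{\xi}+w\bigr)$ with $\xi:=\theta(v)\in\mathfrak g$ and $\rho_{a}$ the action of $a$ on $F$. Using the $G$-invariance $\rho_a^*\Omega=\Omega$ to strip off $\rho_{a(x)}$, the pullback $\lambda_F^*\Omega$ evaluated on $(v_1,w_1),(v_2,w_2)$ becomes $\Omega(Y_{\xi_1}+w_1,Y_{\xi_2}+w_2)$, which expands into four pieces. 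The purely $F$-piece reproduces $\Omega$; the piece $\Omega(Y_{\xi_1},Y_{\xi_2})$ is handled by \eqref{phi2}; and the two cross terms $\Omega(Y_{\xi_i},w_j)$ are handled by \eqref{phi1}. The cross terms assemble into $\langle\theta\wedge\mathrm{d}\Phi\rangle$, while the bracket term is rewritten via the Maurer--Cartan structure equation $\mathrm{d}\theta=-\tfrac12[\theta,\theta]$ as $-\langle\mathrm{d}\theta,\Phi\rangle$; combining these with the Leibniz rule $\mathrm{d}\langle\theta,\Phi\rangle=\langle\mathrm{d}\theta,\Phi\rangle-\langle\theta\wedge\mathrm{d}\Phi\rangle$ collapses everything to the clean identity
\begin{eqnarray}
\lambda_F^*\Omega=\Omega-\mathrm{d}\langle a^{-1}\mathrm{d}a,\Phi\rangle.\nonumber
\end{eqnarray}
This is the step I expect to be the main obstacle, since it requires getting the conventions for $Y_\xi$, the sign in the Maurer--Cartan equation, and the Leibniz rule for the $\mathfrak g$--$\mathfrak g^*$ pairing all mutually consistent.

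For the remaining term I would compute $\lambda_F^*\langle A_{\phi'},\Phi\rangle$ pointwise. At $(x,f)$ a vector $(v,w)$ is sent by $\lambda_F$ to a vector over $(x,a(x)\cdot f)$, and since $A_{\phi'}$ only pairs with the $X$-direction this gives $\langle A_{\phi'}(v),\Phi(a(x)\cdot f)\rangle$. Now I substitute $A_{\phi'}=\mathrm{Ad}_a(A_\phi-\theta)$ and the equivariance $\Phi(a\cdot f)=\mathrm{Ad}^*_a\Phi(f)$, and use the $\mathrm{Ad}$-invariance of the pairing $\langle\mathrm{Ad}_a\eta,\mathrm{Ad}^*_a\mu\rangle=\langle\eta,\mu\rangle$ to cancel the two copies of $\mathrm{Ad}_a$, obtaining $\lambda_F^*\langle A_{\phi'},\Phi\rangle=\langle A_\phi,\Phi\rangle-\langle a^{-1}\mathrm{d}a,\Phi\rangle$. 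Feeding the two computations into $\lambda_F^*\Omega_{\phi'}=\lambda_F^*\Omega-\mathrm{d}\,\lambda_F^*\langle A_{\phi'},\Phi\rangle$, the two $\mathrm{d}\langle a^{-1}\mathrm{d}a,\Phi\rangle$ contributions cancel and what survives is exactly $\Omega-\mathrm{d}\langle A_\phi,\Phi\rangle=\Omega_\phi$, which is the first assertion.

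Finally, the ``consequently'' clause is purely formal. Commutativity of the displayed triangle gives $\phi_F=\phi'_F\circ\lambda_F$, hence $\lambda_F\circ\phi_F^{-1}=(\phi'_F)^{-1}$; applying $(\phi_F^{-1})^*$ to the identity $\Omega_\phi=\lambda_F^*\Omega_{\phi'}$ and using functoriality of pullback then yields $(\phi_F^{-1})^*\Omega_\phi=\bigl((\phi'_F)^{-1}\bigr)^*\Omega_{\phi'}$, so the two locally defined forms glue to a single well-defined form $\Omega_\Theta$ on $\mathcal F$.
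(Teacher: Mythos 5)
Your proposal is correct and follows essentially the same route as the paper: reduce the lemma to the identity $\lambda_F^*\Omega=\Omega-\mathrm{d}\langle a^{-1}\mathrm{d}a,\Phi\rangle$ by absorbing the $\langle A_{\phi'},\Phi\rangle$ term through equivariance of $\Phi$ and the gauge transformation law, then prove that identity by differentiating the twisted action map and invoking Eqs.~\eqref{phi1}--\eqref{phi2} together with the Maurer--Cartan equation. The only differences are cosmetic (you peel off $L_{a(x)}$ so as to work at $f$ with $\theta=a^{-1}\mathrm{d}a$, where the paper works at $a(x)\cdot f$ with $\xi=\mathrm{d}a\,a^{-1}$, and you write $\mathrm{d}\theta=-\tfrac12[\theta,\theta]$ in place of the matrix identity $\mathrm{d}\eta=-\eta^2$).
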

Since $\Omega_{\phi'}=\Omega-{\mathrm d}\langle A_{\phi'}, \Phi\rangle$ and $\Omega_{\phi}=\Omega-{\mathrm d}\langle A_{\phi}, \Phi\rangle$, we have
\begin{eqnarray}
\lambda_F^*\Omega_{\phi'}&=&\lambda_F^*\Omega-{\mathrm d}\langle A_{\phi'}, L_a^*\Phi\rangle\quad \mbox{here $L_a$ is the left action of $a$ on $F$}\cr
&=& \lambda_F^*\Omega-{\mathrm d}\langle A_{\phi'}, {{\mathrm {Ad}}_{a^{-1}}}^*\Phi\rangle\quad \mbox{because $\Phi$ is $G$-equivariant}\cr
&=& \lambda_F^*\Omega-{\mathrm d}\langle {\mathrm {Ad}}_{a^{-1}} A_{\phi'},\Phi\rangle\cr
&=& \lambda_F^*\Omega-{\mathrm d}\langle A_{\phi},\Phi\rangle + {\mathrm d}\langle a^{-1}{\mathrm d}a,\Phi\rangle \quad \mbox{using Eq. \eqref{gaugeT}}\cr
&=&\Omega_{\phi}+ \lambda_F^*\Omega-\Omega+{\mathrm d}\langle a^{-1}{\mathrm d}a,\Phi\rangle, \nonumber
\end{eqnarray}
so the above lemma is equivalent to
\begin{claim} Let $\lambda_F$: $X\times F\to X\times F$, $\Phi$: $F\to \mathfrak g^*$, $a$: $X\to G$, and $\Omega$ be as before. Then $\lambda_F^*\Omega=\Omega - {\mathrm d}\langle a^{-1}{\mathrm d}a,\Phi\rangle$.
\end{claim}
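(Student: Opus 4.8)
The plan is to evaluate both sides of the asserted identity on an arbitrary pair of tangent vectors at a point $(x,f)\in X\times F$ and to match them term by term, using three ingredients already at hand: the $G$-invariance of $\Omega$, the moment-map relations \eqref{phi1} and \eqref{phi2}, and the Maurer--Cartan structure of $a^{-1}\mathrm{d}a$. Throughout I keep in mind the pullback convention: $\Omega$, $\Phi$ and $\mathrm{d}\Phi$ are pulled back from $F$ (so they only see the $F$-direction of a tangent vector), while $a^{-1}\mathrm{d}a$ is pulled back from $X$ (so its value on a tangent vector depends only on the $X$-component).

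First I would differentiate $\lambda_F$. Writing $a_0:=a(x)$ and, for $v\in T_xX$, $\xi:=(a^{-1}\mathrm{d}a)_x(v)\in\mathfrak g$, a short chain-rule computation gives
\[
\mathrm{d}\lambda_F(v,w)=\bigl(v,\ (\mathrm{d}L_{a_0})_f\bigl(w+Y_\xi(f)\bigr)\bigr),\qquad w\in T_fF,
\]
the field $Y_\xi$ appearing precisely because the Maurer--Cartan form $a^{-1}\mathrm{d}a$ converts the variation of $x$ into an infinitesimal left group motion. Since $\Omega$ is pulled back from $F$ and is $G$-invariant ($L_{a_0}^*\Omega=\Omega$), evaluating $\lambda_F^*\Omega$ on two vectors $(v_i,w_i)$ collapses to an expression on $F$ alone:
\[
(\lambda_F^*\Omega)\bigl((v_1,w_1),(v_2,w_2)\bigr)=\Omega\bigl(w_1+Y_{\xi_1},\,w_2+Y_{\xi_2}\bigr),\qquad \xi_i:=(a^{-1}\mathrm{d}a)(v_i).
\]
Expanding the right-hand side into four terms, I would keep $\Omega(w_1,w_2)$ as the $\Omega$-contribution, rewrite the two cross terms $\Omega(Y_{\xi_1},w_2)$ and $\Omega(w_1,Y_{\xi_2})$ via \eqref{phi1} as $\langle\xi_1,\mathrm{d}\Phi(w_2)\rangle-\langle\xi_2,\mathrm{d}\Phi(w_1)\rangle$, and rewrite $\Omega(Y_{\xi_1},Y_{\xi_2})$ via \eqref{phi2} as $\langle[\xi_1,\xi_2],\Phi\rangle$.

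It then remains to check that $-\mathrm{d}\langle a^{-1}\mathrm{d}a,\Phi\rangle$ produces exactly these three correction terms. Here I would apply the Leibniz rule to the real one-form $\langle a^{-1}\mathrm{d}a,\Phi\rangle$. The mixed part, in which $\mathrm{d}\Phi$ is paired against $a^{-1}\mathrm{d}a$, reproduces the two cross terms; and the structure-equation part comes from $\mathrm{d}(a^{-1}\mathrm{d}a)=-(a^{-1}\mathrm{d}a)\wedge(a^{-1}\mathrm{d}a)$, whose value on $(v_1,v_2)$ is the commutator $-[\xi_1,\xi_2]$, so that pairing against $\Phi$ yields the bracket term with the sign matching \eqref{phi2}. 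Matching the two expansions finishes the proof.

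I expect the only delicate point to be the bookkeeping of signs together with the horizontal/vertical splitting: one must track which cross terms survive because $a^{-1}\mathrm{d}a$ kills $F$-directions while $\mathrm{d}\Phi$ kills $X$-directions, and one must verify that the commutator emerging from the Maurer--Cartan equation lands on the correct side. A cleaner but equivalent way to isolate this point is to first establish the universal identity $\mu^*\Omega=\Omega-\mathrm{d}\langle g^{-1}\mathrm{d}g,\Phi\rangle$ on $G\times F$ for the action map $\mu\colon G\times F\to F$, and then pull back along $a\times\mathrm{id}_F\colon X\times F\to G\times F$, under which $g^{-1}\mathrm{d}g$ becomes $a^{-1}\mathrm{d}a$ while $\Omega$ and $\Phi$ are left unchanged; this yields the claim at once.
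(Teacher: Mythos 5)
Your proof is correct and follows essentially the same route as the paper's: differentiate $\lambda_F$, use the $G$-invariance of $\Omega$ together with Eqs.~\eqref{phi1} and \eqref{phi2} to expand $\Omega(w_1+Y_{\xi_1},w_2+Y_{\xi_2})$, and match the result against the Leibniz/Maurer--Cartan expansion of $-\mathrm{d}\langle a^{-1}\mathrm{d}a,\Phi\rangle$. The only cosmetic difference is that you factor the differential through $\mathrm{d}L_{a_0}$ at the outset and work with $\xi=(a^{-1}\mathrm{d}a)(v)$ at the point $f$ directly, whereas the paper first writes everything at $a(x)\cdot f$ with $\mathrm{d}a\,a^{-1}$ and then converts via equivariance; your concluding remark about the universal identity on $G\times F$ is a nice aside but not needed.
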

\begin{proof} Let $\xi=\mathrm{d} a\, a^{-1}$. Since $\lambda_F(x, f)=(x, a(x)\cdot f)$, we have
\begin{eqnarray}
T_{(x, f)}\lambda_F: \;T_xX \times T_f F &\to& T_xX\times T_{a(x)\cdot f} F\cr
(u, v) &\mapsto & (u, T_fL_{a(x)}(v)+(Y_{u\lrcorner \xi|_x})|_{a(x)\cdot f}).\nonumber
\end{eqnarray}
With the understanding that $\Omega$ represents both the symplectic form on $F$ and its pullback under projection $X\times F\to F$, for $(u_1, v_1)$, $(u_2, v_2)$ in $T_{(x, f)}(X\times F)$, $$(\lambda_F^*\Omega)|_{(x,f)}((u_1, v_1), (u_2, v_2))$$ is equal to 
\begin{eqnarray}
 &&(L_{a(x)}^*\Omega)|_f (v_1, v_2)\cr
 &&+\, \Omega (Y_{u_1\lrcorner\,\xi|_x}, Y_{u_2\lrcorner\,\xi|_x})|_{a(x)\cdot f}\cr
 &&+\, \Omega|_{a(x)\cdot f}(Y_{u_1\lrcorner\,\xi|_x}|_{a(x)\cdot f}, T_fL_{a(x)}(v_2))-\,\Omega|_{a(x)\cdot f}(Y_{u_2\lrcorner\,\xi|_x}|_{a(x)\cdot f}, T_fL_{a(x)}(v_1)).\nonumber
\end{eqnarray}
Let $\eta=a^{-1}\, {\mathrm d}a$. Then, in view of the fact that $\Phi$ is $G$-equivariant and $\eta=\mathrm{ad}_{a^{-1}}\xi$, the above expression becomes
\begin{eqnarray}
 && \Omega|_f (v_1, v_2)\quad\mbox{because $\Omega$ is $G$-invariant}\cr
 &&+\langle [u_1\lrcorner\,\eta|_x, u_2\lrcorner\,\eta|_x], \Phi|_f\rangle  \quad \mbox{using Eq. \eqref{phi2}}\cr
 &&+\langle u_1\lrcorner\,\eta|_x, v_2\lrcorner\,{\mathrm d}\Phi|_f\rangle - \langle u_2\lrcorner\,\eta|_x, v_1\lrcorner\,{\mathrm d}\Phi|_f\rangle \quad \mbox{using Eq. \eqref{phi1}}\cr
 &=& \Omega|_f (v_1, v_2)+\langle \eta^2|_x, \Phi|_f\rangle(u_1, u_2)+\langle \eta|_x, {\mathrm d}\Phi|_f\rangle((u_1, v_1), (u_2, v_2))\cr
 &=& (\Omega-{\mathrm d}\langle \eta, \Phi\rangle)|_{(x, f)}((u_1, v_1), (u_2, v_2))\quad\mbox{because $d\eta =-\eta^2$}.\nonumber
 \end{eqnarray}
 Therefore $\lambda_F^*\Omega=\Omega - {\mathrm d}\langle \eta,\Phi\rangle = \Omega - {\mathrm d}\langle a^{-1}{\mathrm d}a,\Phi\rangle$.
\end{proof}

\subsection{Sternberg phase space}
With the Sternberg form $\Omega_\Theta$ on $\mathcal F$ and the canonical symplectic form $\omega_X$ on $T^*X$ being both closed, and the fact from the definition of $\mathcal F^\sharp$ that there are fiber bundle projections from $\mathcal F^\sharp$ to $\mathcal F$ and also to $T^*X$, we know that
\begin{eqnarray}
\fbox{$\omega_\Theta:=\omega_X+\Omega_\Theta$}
\end{eqnarray} is a closed real differential two-form on $\mathcal F^\sharp$.
\begin{claim}
With the notations as above, $\omega_\Theta$ is non-degenerate everywhere on $\mathcal F^\sharp$, so it is a symplectic form on $\mathcal F^\sharp$. 
\end{claim}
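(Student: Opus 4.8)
The plan is to prove non-degeneracy pointwise, since non-degeneracy of a two-form is a purely local condition at each point and we already know $\omega_\Theta$ is closed. First I would pass to a local trivialization $\phi$ of $P\to X$ over an open set $U\subseteq X$, so that $\mathcal F^\sharp|_U\cong T^*U\times F$ with local coordinates $(q^i, p_i, f)$, $f\in F$. Over this chart the identification $\phi_F$ together with the preceding Lemma and Claim gives $\phi_F^*\Omega_\Theta=\Omega_\phi=\Omega-\mathrm d\langle A_\phi,\Phi\rangle$, so that
\[
\omega_\Theta=\omega_X+\Omega-\mathrm d\langle A_\phi,\Phi\rangle,
\]
where $\omega_X=\mathrm dp_i\wedge\mathrm dq^i$ is the canonical form on $T^*U$, $\Omega$ is the symplectic form on $F$, and $\langle A_\phi,\Phi\rangle=A_i^a\,\Phi_a\,\mathrm dq^i$ is (the pullback of) a one-form on $U\times F$.

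The decisive structural observation I would isolate next is that the correction term $\mathrm d\langle A_\phi,\Phi\rangle$ is the pullback of a two-form on $U\times F$ under the projection $\mathcal F^\sharp|_U\to U\times F$ that forgets the momentum variable $p$. Consequently it contains no $\mathrm dp_i$, i.e. $\partial_{p_i}\lrcorner\,\mathrm d\langle A_\phi,\Phi\rangle=0$; likewise $\partial_{p_i}\lrcorner\,\Omega=0$. Hence the only term of $\omega_\Theta$ that produces a $\mathrm dp_i$-component under interior multiplication is $\omega_X$ itself.

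With this in hand the kernel argument runs cleanly. Let $v$ be a tangent vector at a point of the chart with $v\lrcorner\,\omega_\Theta=0$, and decompose $v=v^i_{(p)}\partial_{p_i}+v^i_{(q)}\partial_{q^i}+v_{(F)}$ into its momentum, position and fibre parts. Reading off the $\mathrm dp_i$-components of $v\lrcorner\,\omega_\Theta$, by the previous paragraph these come only from $v\lrcorner\,\omega_X$ and force $v^i_{(q)}=0$. With the position part gone, $v\lrcorner\,\omega_\Theta$ reduces to $v^i_{(p)}\,\mathrm dq^i+v_{(F)}\lrcorner\,\Omega$ minus a purely $\mathrm dq$-valued term proportional to $v_{(F)}$; its $F$-component is exactly $v_{(F)}\lrcorner\,\Omega$, and since $\Omega$ is non-degenerate on $F$ this yields $v_{(F)}=0$. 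The remaining equation is then $v^i_{(p)}\,\mathrm dq^i=0$, forcing $v^i_{(p)}=0$. Thus $v=0$, so $\omega_\Theta$ is non-degenerate at the point, and therefore everywhere on $\mathcal F^\sharp$.

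The one point requiring care — which I would treat as the crux of the argument — is the semibasicness of the magnetic correction with respect to the momentum variable, so that the block of $\omega_\Theta$ coupling the $\partial_{p_i}$ to everything else is exactly the invertible block supplied by $\omega_X$. Once this is secured, the conclusion is the familiar fact that a symplectic product form stays symplectic after adding a closed two-form pulled back from the ``base times fibre''; equivalently, the local matrix of $\omega_\Theta$ in the ordered basis $(\partial_{p_i},\partial_{q^i},TF)$ is block-triangular with invertible diagonal blocks $\pm I$ (from $\omega_X$) and $\Omega$ (from $F$), hence invertible irrespective of the magnetic block. Since closedness of $\omega_\Theta$ is already known, non-degeneracy is all that remains, and the proof is complete.
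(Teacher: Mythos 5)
Your proposal is correct and follows essentially the same route as the paper: the paper writes $\omega_\Theta$ in the local coordinates $(q^i,p_j,z^\alpha)$ as ${\mathrm d}p_i\wedge{\mathrm d}q^i+\tfrac12\Omega_{\alpha\beta}\,\mathrm{d}z^\alpha\wedge\mathrm{d}z^\beta+(\text{terms with no }\mathrm{d}p)$ and declares non-degeneracy ``easy to see,'' which is precisely the block-triangular/kernel argument you spell out (the $\mathrm{d}p_i$-components kill the $\partial_{q^i}$-part, non-degeneracy of $\Omega$ kills the fibre part, and what remains kills the $\partial_{p_i}$-part). You have simply made explicit the step the paper leaves implicit.
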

\begin{proof}Introducing local coordinate functions $(q^i, p_j)$ on $T^*X$ and $z^\alpha$ on $F$, and denoting $\partial\over \partial q^i$ by $\partial_i$,   $\partial\over \partial z^\alpha$ by $\partial_\alpha$, then $\omega_\Theta$ can be locally represented by
\begin{eqnarray}
{\mathrm d}p_i\wedge {\mathrm d}q^i+{1\over 2}\Omega_{\alpha\beta}\, \mathrm{d}z^\alpha\wedge \mathrm{d}z^\beta-{1\over 2}\langle \partial_iA_j-\partial_jA_i, \Phi\rangle\, {\mathrm d}q^i\wedge {\mathrm d}q^j + \langle A_i, \partial_{\alpha}\Phi\rangle\, {\mathrm d}q^i\wedge \mathrm{d}z^\alpha\nonumber
\end{eqnarray}
which is then easy to see to be non-degenerate everywhere. Therefore, $\omega_\Theta$ is a symplectic form on $P^\sharp$.
\end{proof}
In summary, we have
\begin{theorem}[Sternberg, 1977]
With the data and notations given in the beginning of this section, we have the following statements.

1) There is a closed real differential two-form $\Omega_\Theta$ on $\mathcal F$ which is of the form $\Omega-\mathrm{d}\langle A, \Phi\rangle$ under a local trivialization of $P\to X$ in which the connection form $\Theta$ is represented by the $\frak g$-valued differential one-form $A$ on $X$.

2) The differential two-form $\omega_\Theta:=\omega_X+\Omega_\Theta$ is a symplectic form on $\mathcal F^\sharp$.
\end{theorem}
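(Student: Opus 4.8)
The plan is to treat the two parts separately: part 1) will produce a globally defined closed two-form on $\mathcal F$ by gluing the local models $\Omega_\phi$, and part 2) will reduce to a purely pointwise non-degeneracy check for $\omega_\Theta$, closedness being free.

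For part 1), I would observe that the Lemma proved above supplies exactly the overlap-consistency needed: since $({\phi_F}^{-1})^*\Omega_\phi=({\phi'_F}^{-1})^*\Omega_{\phi'}$ holds on every overlap of trivializing charts of $P\to X$, the locally defined two-forms $({\phi_F}^{-1})^*\Omega_\phi$ patch together into a single globally defined two-form $\Omega_\Theta$ on $\mathcal F$, characterized by $\phi_F^*\Omega_\Theta=\Omega_\phi$ for every trivialization $\phi$. That each local model $\Omega_\phi=\Omega-\mathrm d\langle A_\phi,\Phi\rangle$ is real is immediate, and that it is closed follows because $\Omega$ is closed and $\mathrm d\langle A_\phi,\Phi\rangle$ is exact. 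As the maps $\phi_F$ are diffeomorphisms covering $\mathcal F$, both reality and closedness descend to $\Omega_\Theta$, and the asserted local form $\Omega-\mathrm d\langle A,\Phi\rangle$ is precisely the expression of $\Omega_\Theta$ in a trivialization whose connection one-form is $A=A_\phi$.

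For part 2), I would first note that closedness of $\omega_\Theta=\omega_X+\Omega_\Theta$ is automatic: $\mathcal F^\sharp$ is defined as a pullback and so carries fiber bundle projections onto both $T^*X$ and $\mathcal F$, and $\omega_\Theta$ is the sum of the pullbacks of the closed forms $\omega_X$ and $\Omega_\Theta$. The only substantive point is non-degeneracy, a pointwise statement which may be checked in the local coordinates $(q^i,p_j,z^\alpha)$ introduced above, where $\omega_\Theta$ acquires the explicit block expression already displayed. Here the strategy is a triangular elimination: contracting a candidate kernel vector $v=a^i\partial_i+b_i\,\frac{\partial}{\partial p_i}+c^\alpha\partial_\alpha$ with $\omega_\Theta$, the $\mathrm dp_i$-components of $v\lrcorner\omega_\Theta$ come solely from $\mathrm dp_i\wedge\mathrm dq^i$ and force $a^i=0$; with the $q$-directions gone, the $\mathrm dz^\beta$-components reduce to $\Omega_{\alpha\beta}c^\alpha$, which vanish only if $c^\alpha=0$ since $\Omega$ is non-degenerate on $F$; finally the surviving $\mathrm dq^i$-components force $b_i=0$. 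Hence $v=0$ and $\omega_\Theta$ is symplectic.

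I expect the genuine obstacle to lie entirely in part 1)'s consistency statement, namely the Lemma that was reduced to the Claim above: its content is that the gauge-transformation law $A_\phi=a^{-1}A_{\phi'}a+a^{-1}\mathrm da$ conspires with the $G$-equivariance of $\Phi$ and the Maurer-Cartan identity $\mathrm d\eta=-\eta^2$ to cancel the anomalous terms. Granting that, the Theorem is a matter of assembly together with the block computation, and the only care needed in part 2) is to see that the mixed $\mathrm dq^i\wedge\mathrm dz^\alpha$ coupling cannot create a kernel once the elimination is performed in the order $q$, then $z$, then $p$.
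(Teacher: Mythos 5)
Your proposal is correct and follows essentially the same route as the paper: the overlap-consistency Lemma (reduced to the Claim about $\lambda_F^*\Omega=\Omega-\mathrm d\langle a^{-1}\mathrm da,\Phi\rangle$) glues the local forms $\Omega_\phi$ into the global $\Omega_\Theta$, closedness of $\omega_\Theta$ comes from pulling back closed forms along the two projections out of $\mathcal F^\sharp$, and non-degeneracy is checked in the local coordinates $(q,p,z)$. Your triangular elimination in the order $a^i\to c^\alpha\to b_i$ is a correct fleshing-out of the step the paper dismisses as ``easy to see.''
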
 
The symplectic manifold $(\mathcal F^\sharp, \omega_\Theta)$ is referred to as a {\bf Sternberg phase space}. In particular $(T^*X, \omega_X)$ is a Sternberg phase space. In Ref. \cite{Weinstein78} A. Weinstein introduced a symplectic space out of the principal $G$-bundle $P\to X$ and the hamiltonian $G$-space $F$, and showed that a connection $\Theta$ on $P\to X$ yields a symplectomorphism from his symplectic space to the Sternberg phase space $(\mathcal F^\sharp, \omega_\Theta)$, a reason for A. Weinstein to call his symplectic space the {\em universal phase space}. 

\section{A canonical isomorphism of double vector bundles}\label{canonical isomorphism}
The purpose of this section is to give a detailed review of the following simple and elegant mathematical fact \cite{Dufour1991}: {\em a real vector bundle $E\to X$ and its dual vector bundle $E^*\to X$ are not canonically isomorphic, but $T^*E^*$ and $T^*E$ are canonically isomorphic as symplectic manifolds}. 

Let $\pi$: $E\to X$ be a real vector bundle and $\pi^*$: $E^*\to X$ be its dual vector bundle. Consider the diagram
\begin{eqnarray}\label{dia}
\begin{tikzcd}
T^*X \arrow{dr}{\pi_X} &E^* \arrow{d}{\pi^*}\\
E\arrow{r}{\pi}& X
\end{tikzcd}
\end{eqnarray}
The limit of diagram \eqref{dia} exists and is unique up to diffeomorphisms, in fact, it is the total space of the Whitney sum of the three vector bundles over $X$. We shall show that both $T^*E^*$ and $T^*E$ can be this limit, so they must be diffeomorphic to each other. The detailed arguments are given below.

\underline{Step one}. For any $e\in E$, we have injective linear map $E_{\pi(e)}\cong T_eE_{\pi(e)}\subset T_eE$ whose dual is a surjective linear map $T^*_eE\to E^*_{\pi(e)}$ which, upon being globalized, becomes the top arrow $p_E$ in the commutative square 
\begin{eqnarray}\label{doublevb}
\begin{tikzcd}
T^*E \arrow{r}{p_E}\arrow{d}[swap]{\pi_E} &E^* \arrow{d}{\pi^*}\\
E\arrow{r}{\pi}& X.
\end{tikzcd} 
\end{eqnarray}

\underline{Step two}.  Choosing a connection on $E\to X$, then we have the commutative square
\begin{eqnarray}
\begin{tikzcd}
T^*E \arrow{r}{T^*_\pi}\arrow{d}[swap]{\pi_E} &T^*X \arrow{d}{\pi_X}\\
E\arrow{r}{\pi}& X\nonumber
\end{tikzcd} 
\end{eqnarray} by appendix \ref{A: fact}.

\underline{Step three}. Combining steps one and two, we have commutative diagram
\begin{eqnarray}\label{limit}
\begin{tikzcd}
T^*E \arrow[bend left]{drr}{p_E}
\arrow[bend right]{ddr}[swap]{\pi_E}
\arrow{dr}{T^*_\pi} & & \\
& T^*X \arrow{rd}{\pi_X}& E^* \arrow{d}{\pi^*} \\
& E \arrow{r}{\pi} & X
\end{tikzcd}
\end{eqnarray}
which in turn yields a smooth map $T^*E\to T^*X\oplus E\oplus E^*$, fibering over $X$. This smooth map is a bijection because  $T^*_eE\to T^*_{\pi(e)}X \times \{e\}\times E^*_{\pi (e)}$ is a bijection for each $e\in E$. Moreover, by using the local triviality of $E\buildrel\pi\over\to X$, one can check that this map is a diffeomorphism, so it turns $T^*E\to X$ into a vector bundle over $X$. This vector bundle is isomorphic to $T^*X\oplus E\oplus E^{*}$, but it is not canonical because of its dependence on the choice of a connection on $E\buildrel\pi\over\to X$.  

\underline{Step four}.  Since $E^{**}\cong E$ naturally and a connection on $E\buildrel\pi\over\to X$ is turned into a connection on $E^*\buildrel\pi^*\over\to X$ upon taking dual, replacing $E$ by $E^*$ in the above analysis, commutative diagram \eqref{limit} becomes commutative diagram
\[
\begin{tikzcd}
T^*E^* \arrow[bend left]{drr}{p_{E^*}}
\arrow[bend right]{ddr}[swap]{\pi_{E^*}}
\arrow{dr}{T^*_{\pi^*}} & & \\
& T^*X \arrow{rd}{\pi_X}& \hskip 25pt E^{**}\cong E \arrow{d}{\pi^{**}} \\
& E^* \arrow{r}{\pi^*} & X
\end{tikzcd}\] So we have vector bundle isomorphism $T^*E^*\cong T^*X\oplus E^*\oplus E^{**}$ by the same argument as above. 

\underline{Step five}. Define $\kappa$ via commutative diagram
\begin{eqnarray}\label{Definition: kappa}
\begin{tikzcd}
T^*E^* \arrow{rr}{T^*_{\pi^*}\oplus \pi_{E^*}\oplus p_{E^*}}[swap]{\cong}\arrow{d}[swap]{\kappa}{\cong} &&T^*X\oplus E^*\oplus E^{**}\arrow{d}{1\oplus 1\oplus - \iota^{-1}}[swap]{\cong}\\
T^*E \arrow{rr}{T^*_{\pi}\oplus p_E\oplus \pi_E}[swap]{\cong} &&T^*X\oplus E^*\oplus E
\end{tikzcd} 
\end{eqnarray} where $-\iota^{-1}$ is the negative of the inverse of the natural vector bundle identification
\begin{eqnarray}\label{identification}
\iota: E&\longrightarrow & E^{**}\cr
u &\mapsto & \alpha\mapsto\langle u, \alpha\rangle.
\end{eqnarray} We shall see later that,  $\kappa$ is a symplectomorphism and is independent of the choice of a connection on $E\to X$.

\vskip 10pt
In summary, once a connection on $E\buildrel\pi\over\to X$ is chosen, $T^*E$ and $T^*E^*$ become vector bundles over $X$; moreover, we have vector bundle isomorphisms $T^*E\cong T^*X\oplus E\oplus E^*$ and $T^*E^*\cong T^*X\oplus E^*\oplus E^{**}$. Since $E\oplus E^*\cong E^*\oplus E^{**}$, we have vector bundle isomorphism $\kappa$: $T^*E^*\to T^*E$ over $X$. 
While the vector bundle structures on $T^*E\to X$ and $T^*E^*\to X$ depend on the choice of a connection on $E\buildrel\pi\over\to X$, the diffeomorphism $\kappa$ does not if we identify $E\oplus E^*$ with $E^*\oplus E^{**}$ via
$$
 \begin{pmatrix}
 0&1\cr
- \iota &0
 \end{pmatrix}
$$ 
where $\iota$ is the map defined in Eq. \eqref{identification}. Since $\pi_{E^*}$: $T^*E^*\to E^*$ is a vector bundle and the definition of $\kappa$ in diagram \eqref{Definition: kappa} makes triangle 
\begin{eqnarray}
\begin{tikzcd}[column sep=small]
T^*E^* \arrow{rr}{\kappa}[swap]{\cong} \arrow{dr}[swap]{\pi_{E^*}} & &  T^*E \arrow{dl}{p_E}\\
& E^* & \nonumber
\end{tikzcd}
\end{eqnarray} 
commutative, the canonical diffeomorphism $\kappa$ turns $p_E$: $T^*E\to E^*$ into a canonical vector bundle so that $\kappa$ becomes a canonical vector bundle isomorphism over $E^*$ and $(T^*E; E, E^*; X)$ as in diagram \eqref{doublevb} becomes a (canonical) {\em double vector bundle} (with $T^*X\buildrel\pi_X\over\to X$ as its core) in the sense of J. Pradines \cite{Pradines68}. 

In short, {\em to any real vector bundle $E\buildrel\pi\over\to X$, there associate two canonically isomorphic double vector bundles $(T^*E; E, E^*; X)$ and $(T^*E^*; E^*, E^{**}; X)$:
\[
\begin{tikzcd}[row sep=scriptsize, column sep=scriptsize]
&T^*E^* \arrow{dl}[swap]{\pi_{E^*}} \arrow{rr}{p_{E^*}} \arrow[dashed]{dd}& & E^{**} \arrow{dl}{\pi^{**}} \arrow{dd}[description]{-\iota^{-1}}\\ 
E^* \arrow[crossing over]{rr}{\hskip 15pt\pi^*} \arrow{dd}[description]{1} & & X \\
& T^*E \arrow{dl}[swap] {p_E} \arrow{rr}{\hskip -25pt \pi_E} & & E \arrow{dl}{\pi} \\
E^* \arrow{rr}{\pi^*} & & X \arrow[crossing over, leftarrow]{uu}[description]{1}
\end{tikzcd}
\] where the dashed arrow is $\kappa$. Moreover $\kappa$ is a symplectomorphism.} This isomorphism of double vector bundles shall be referred to as {\bf the canonical isomorphism}.

\vskip 10pt
It remains to show that $\kappa$ is a canonical symlectomorphism, i.e., it is a symlectomorphism and is independent of the choice of a connection on $E\buildrel\pi\over\to X$. That will be clear after we work out a local representation for $\kappa$.

\subsection{Local formulae}
Let $n=\dim X$ and $k$ be the rank of $E\buildrel\pi\over\to X$. Since we work locally, we may assume that $X$ is diffeomorphic to $\mathbb R^n$ and $E\buildrel\pi\over\to X$ is trivial. Let us fix a diffeomorphism $Q$: $X\to \mathbb R^n$ and a trivialization 
\begin{eqnarray}
\begin{tikzcd}[column sep=small]
E \arrow{rr}{\phi}[swap]{\cong} \arrow{dr}[swap]{\pi} & &  X\times \mathbb R^k \arrow{dl}{p_1}\\
& X & \nonumber
\end{tikzcd}
\end{eqnarray} where $p_1$ is the projection onto the first factor. Then we have diffeomorphisms $$E\cong X\times \mathbb R^k\cong \mathbb R^n\times \mathbb R^k, \quad E^*\cong X\times (\mathbb R^k)^*\cong \mathbb R^n\times \mathbb R^k$$ 
which shall be denoted by $(q, u)$ and $(q, \alpha)$ respectively, and diffeomorphisms 
\begin{eqnarray}
T^*E &\cong& \mathbb R^n\times \mathbb R^k\times (\mathbb R^n)^*\times (\mathbb R^k)^*\cong (\mathbb R^n\times \mathbb R^k)^2, \cr 
T^*E^*&\cong& \mathbb R^n\times (\mathbb R^k)^*\times (\mathbb R^n)^*\times (\mathbb R^k)^{**} \cong (\mathbb R^n\times \mathbb R^k)^2\nonumber
\end{eqnarray}
which shall be denoted by $(q, u, p, \alpha)$ and $(q, \alpha, p, \hat u)$ respectively. 

Under the trivialization $\phi$, the connection on $E\buildrel\pi\over\to X$ is represented by a real $k\times k$-matrix valued differential one-form $A\cdot \mathrm{d}q :=A_i\,\mathrm{d}q^i $ on $X$, according to formula \eqref{localT^*_f}, the map $T^*E\buildrel T^*_\pi\over\to T^*X$ can be represented by
\begin{eqnarray}
(q, u, p, \alpha)\mapsto (q, p- \alpha\cdot Au)
\end{eqnarray} where $\cdot$ means the dot product of $\mathbb R^k$, and $A$ is viewed as a local function on $T^*Y$. On the other hand, the connection on $E^*\buildrel\pi^*\over\to X$ is represented by $-A^T$ on $X$, so, according to formula \eqref{localT^*_f}, the map $T^*E^*\buildrel T^*_{\pi^*}\over\to T^*X$ can be represented by
\begin{eqnarray}
(q,  \alpha, p, \hat u)\mapsto (q, p+\hat u\cdot A^T\alpha)=(q, p+\alpha\cdot A\hat u)).
\end{eqnarray}
Note also that, the map $T^*E\buildrel \pi_E\over \to E$ and $T^*E\buildrel p_E\over \to E^*$ are represented by
$$
(q, u, p,  \alpha)\mapsto (q, u), \quad (q, u, p, \alpha)\mapsto (q, \alpha)
$$ respectively, the map $T^*E^*\buildrel \pi_{E^*}\over\to E^*$ and $T^*E^*\buildrel p_{E^*}\over\to E^{**}$ are represented by
$$
(q, \alpha, p, \hat u)\mapsto (q, \alpha), \quad (q, \alpha, p, \hat u)\mapsto (q, \hat u)
$$ respectively, and $\iota$: $E\to E^{**}$ is represented by $(q, u)\mapsto (q, u)$.  Therefore, from the definition of $\kappa$ in diagram \eqref{Definition: kappa}, we conclude that Tulczyjew isomorphism $\kappa$: $T^*E^*\to T^*E$ is represented by
\begin{eqnarray}
(q,  \alpha, p, \hat u)\mapsto (q,  -\hat u, p, \alpha).
\end{eqnarray}

This local representation of $\kappa$ immediately implies that $\kappa$ preserves the natural symplectic structures, is independent of the choice of connection on $E\buildrel\pi\over\to X$, and fibers over both $E^*$ and $E$, i.e., both triangle  
\begin{eqnarray}
\begin{tikzcd}[column sep=small]
T^*E^* \arrow{rr}{\kappa}[swap]{\cong} \arrow{dr}[swap]{\pi_{E^*}} & &  T^*E \arrow{dl}{p_E}\\
& E^* & \nonumber
\end{tikzcd}
\end{eqnarray} and square 
\begin{eqnarray}
\begin{tikzcd}[column sep=small]
T^*E^* \arrow{rr}{\kappa}[swap]{\cong} \arrow{d}[swap]{p_{E^*}} & &  T^*E \arrow{d}{\pi_E}\\
E^{**}\arrow{rr}{-\iota^{-1}}[swap]{\cong} & & E
\end{tikzcd}\nonumber
\end{eqnarray} are commutative, also an easy fact from commutative diagram \eqref{Definition: kappa}. 

\section{Special symplectic manifold}\label{Special symplectic manifold}
In mid-1970s W. M. Tulczyjew discovered an approach to classical mechanics which brings the Hamiltonian formalism and the Lagrangian formalism under a common geometric roof: the dynamics of a particle with configuration space $X$ is determined by a Lagrangian submanifold $D$ of $TT^*X$ (the total tangent space of $T^*X$), and the description of $D$ by its Hamiltonian $H$: $T^*X\to \mathbb R$ (resp. its Lagrangian $L$: $TX\to\mathbb R$) yields the Hamilton (resp. Euler-Lagrange) equation.

To formulate this approach to mechanics, Tulczyjew introduced the notion of {\em special symplectic manifold}. He observed that, on $TT^*X$, there is one symplectic manifold structure and two special symplectic manifold structures (refereed to as {\em Liouville structures} in Ref. \cite{tulczyjew2008liouville, tulczyjew1999slow}); therefore, for a classical particle with configuration space $X$ under a given conservative force,  there is one dynamics (i.e., the submanifold $D$ which is Lagrangian with respect to the symplecic structure on $TT^*X$) and two descriptions of this dynamics (i.e., the description of $D$ via the two Liouville structures on $TT^*X$).  

\vskip 10pt 
Let $(P, \omega)$ be a symplectic manifold with symplectic form $\omega$, $N$ be a submanifold of $P$. We say that $N$ is an {\em isotropic submanifold} of $(P, \omega)$ if 
the pullback of $\omega$ under the inclusion $N\hookrightarrow P$ is identically zero, and is an
{\em Lagrangian submanifold} of $(P, \omega)$ if it is isotropic and $\dim P=2\dim N$.
\begin{definition}\label{D: SpecialSM}
A {\em special symplectic manifold} is a quadruple $(P, M, \pi, \vartheta)$, where $(P, M, \pi)$ is a smooth fiber bundle, $\vartheta$ is a differential one-form on $P$, and there is a diffeomorphism $\alpha$: $P\to T^*M$ such that $\pi=\pi_M\circ \alpha$ and $\vartheta=\alpha^*\vartheta_M$. 
\end{definition}
The diffeomorphism $\alpha$ is unique, assuming it exists. If $(P, M, \pi, \vartheta)$ is a special symplectic manifold,  then $(P, \mathrm{d}\vartheta)$ is a symplectic manifold isomorphic to $(T^*M, \omega_M)$, and is called the {\bf underlying symplectic manifold} of $(P, M, \pi, \vartheta)$. 

The following proposition is obvious.
\begin{proposition}\label{A: HamitonianSSS}
1) Let $(M, \omega)$ be a symplectic manifold. Then $(TM, M, \tau_M, \mathrm{i}_T\omega)$ is a special symplectic manifold.

2) Let $(P_1, M_1, \pi_1, \vartheta_1)$ and $(P_2, M_2, \pi_2, \vartheta_2)$ be special symplectic manifolds. Then the quadruple $(P_2\times P_1,M_2\times M_1, \pi_2\times \pi_1, \vartheta_2-\vartheta_1)$ is a special symplectic manifold.

\end{proposition}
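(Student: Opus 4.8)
The plan is to exhibit, in each case, the diffeomorphism $\alpha$ demanded by Definition \ref{D: SpecialSM} and then verify the two compatibility conditions $\pi=\pi_M\circ\alpha$ and $\vartheta=\alpha^*\vartheta_M$ by unwinding the definitions; no deep input is needed, which is why the authors call the statement obvious.

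For part 1) I would take $\alpha:=\omega^\flat\colon TM\to T^*M$, $v\mapsto v\lrcorner\,\omega=\omega(v,-)$. Since $\omega$ is nondegenerate this is a fiber-preserving vector bundle isomorphism over $\mathrm{id}_M$, hence a diffeomorphism, and fiber-preservation gives $\pi_M\circ\alpha=\tau_M$ for free. The content is the identity $\alpha^*\vartheta_M=\mathrm{i}_T\omega$. Evaluating the Liouville form, for $v\in T_xM$ and $u\in T_v(TM)$ one has $(\alpha^*\vartheta_M)_v(u)=\langle\alpha(v),\,T_v\tau_M(u)\rangle=\omega_x\!\left(v,\,T_v\tau_M(u)\right)$, where the first equality uses $\pi_M\circ\alpha=\tau_M$ and the second is the definition of $\omega^\flat$. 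This is precisely the value of the tangent-lift contraction $(\mathrm{i}_T\omega)_v(u)$ recorded in the appendix, so the two one-forms coincide and $(TM,M,\tau_M,\mathrm{i}_T\omega)$ is special symplectic.

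For part 2) let $\alpha_1\colon P_1\to T^*M_1$ and $\alpha_2\colon P_2\to T^*M_2$ be the diffeomorphisms supplied by the two given special symplectic structures. I would assemble them through the canonical identification $T^*(M_2\times M_1)\cong T^*M_2\times T^*M_1$, under which the projection is $\pi_{M_2}\times\pi_{M_1}$ and, in our pullback convention, the Liouville form is additive, $\vartheta_{M_2\times M_1}=\vartheta_{M_2}+\vartheta_{M_1}$. The only subtlety is the minus sign in $\vartheta_2-\vartheta_1$, and I would produce it by precomposing the first factor with fiberwise negation $\sigma\colon T^*M_1\to T^*M_1$, $\xi\mapsto-\xi$; this covers $\mathrm{id}_{M_1}$ and satisfies $\sigma^*\vartheta_{M_1}=-\vartheta_{M_1}$, which follows immediately from $(\sigma^*\vartheta_{M_1})_\xi(w)=\langle-\xi,\,T\pi_{M_1}(w)\rangle$. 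Setting $\alpha:=(\alpha_2,\,\sigma\circ\alpha_1)$, a composition of diffeomorphisms, I then read off $\pi_{M_2\times M_1}\circ\alpha=\pi_2\times\pi_1$ (since $\sigma$ is fiber-preserving) and $\alpha^*\vartheta_{M_2\times M_1}=\alpha_2^*\vartheta_{M_2}+\alpha_1^*\sigma^*\vartheta_{M_1}=\vartheta_2-\vartheta_1$.

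The single point requiring care, and the place where a sign error would creep in, is this last bookkeeping step: one must attach the negation $\sigma$ to the factor corresponding to the \emph{subtracted} one-form and confirm that fiberwise negation simply rescales the canonical one-form by $-1$. Everything else is a formal consequence of the definition of the Liouville form and the standard behavior of the canonical one-form under fiber-preserving bundle maps.
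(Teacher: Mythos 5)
Your proof is correct and is exactly the standard argument the paper has in mind when it declares the proposition ``obvious'' (it supplies no proof of its own): $\alpha=\omega^\flat$ for part 1), with $(\mathrm{i}_T\omega)_v(u)=\omega(v,T_v\tau_M(u))$ matching the Liouville form, and the product identification $T^*(M_2\times M_1)\cong T^*M_2\times T^*M_1$ with fiberwise negation on the subtracted factor for part 2). Both the sign bookkeeping and the compatibility with the projections are handled correctly, so nothing is missing.
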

The special symplectic manifold in part 1) is referred to as a {\bf Hamiltonian special symplectic manifold}. For us, the interesting $M$ is $T^*X$ or more generally a Sternberg phase space. 

The following two propositions are taken from Ref. \cite[Section 3]{Tulczyjew1972}. 
\begin{proposition}
Let $(P, M, \pi, \vartheta)$ be a special symplectic manifold, $K$ a submanifold of $M$ and $F$ a smooth real function on $K$. Then the set
$$
N:=\{p\in\pi^{-1}(K)\mid \iota^*_{\pi^{-1}(K)}\vartheta = (\pi|_{\pi^{-1}(K})^*\mathrm{d}F\; \mbox{at $p$}\}$$ is a Lagrangian submanifold of $(P, \mathrm{d}\vartheta)$, $K=\pi(N)$, the mapping $\varrho$ defined by the commutative diagram 
\begin{eqnarray}
\begin{tikzcd}[column sep=small]
N \arrow[hook]{rr}{\iota_N}\arrow{d}[swap]{\varrho} & &  P \arrow{d}{\pi}\\
K \arrow[hook]{rr}{\iota_K} & & M\nonumber
\end{tikzcd}
\end{eqnarray}
is a submersion, the fibers of $\varrho$ are connected and $\iota_N^*\vartheta = \varrho^*\mathrm{d}F$.
\end{proposition}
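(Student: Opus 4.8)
The plan is to transport everything to the canonical model $(T^*M, M, \pi_M, \vartheta_M)$ by means of the diffeomorphism $\alpha\colon P\to T^*M$ supplied by Definition \ref{D: SpecialSM}. Since $\pi=\pi_M\circ\alpha$, $\vartheta=\alpha^*\vartheta_M$, and hence $\mathrm{d}\vartheta=\alpha^*\omega_M$, the map $\alpha$ is a symplectomorphism carrying $\pi^{-1}(K)$ onto $\pi_M^{-1}(K)$ and carrying the defining condition for $N$ to the analogous condition on $T^*M$. Every assertion — being Lagrangian, the equality $K=\pi(N)$, the submersion property of $\varrho$, connectedness of the fibres, and the identity $\iota_N^*\vartheta=\varrho^*\mathrm{d}F$ — is preserved under $\alpha$, so it suffices to prove the statement for $P=T^*M$, $\vartheta=\vartheta_M$, $\pi=\pi_M$.

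Writing $W:=\pi_M^{-1}(K)=T^*M|_K$, a point of $W$ is a pair $(x,p_x)$ with $x\in K$ and $p_x\in T^*_xM$, and $T_{(x,p_x)}W=\{v\in T_{(x,p_x)}(T^*M): T\pi_M(v)\in T_xK\}$. First I would unwind the defining condition. For $v\in T_{(x,p_x)}W$ the Liouville form gives $(\iota_W^*\vartheta_M)(v)=\langle p_x, T\pi_M(v)\rangle$, while $((\pi_M|_W)^*\mathrm{d}F)(v)=\langle (\mathrm{d}F)_x, T\pi_M(v)\rangle$; since $T\pi_M$ maps $T_{(x,p_x)}W$ onto $T_xK$, equality of these two covectors at $(x,p_x)$ is equivalent to $p_x|_{T_xK}=(\mathrm{d}F)_x$. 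Hence
\[
N=\{(x,p_x): x\in K,\ p_x|_{T_xK}=(\mathrm{d}F)_x\}.
\]
This realises $N$ as an affine subbundle of $W\to K$ whose fibre over $x$ is the coset of the conormal space $(T_xK)^\circ=\{p_x\in T_x^*M: p_x|_{T_xK}=0\}$ determined by any extension of $(\mathrm{d}F)_x$.

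To make this rigorous I would introduce the fibrewise restriction map $r\colon T^*M|_K\to T^*K$, $(x,p_x)\mapsto (x,p_x|_{T_xK})$, which is a surjective submersion of vector bundles over $K$; then $N=r^{-1}(\mathrm{d}F(K))$ is the preimage of the embedded section $\mathrm{d}F(K)$, hence a smooth submanifold. The fibre of $\varrho=\pi_M|_N$ over $x$ is an affine space modelled on $(T_xK)^\circ$, so the fibres are connected, $\varrho$ is a submersion, and $\pi_M(N)=K$ because every covector on $T_xK$ extends to $T_xM$. Counting dimensions, $\dim N=\dim K+(\dim M-\dim K)=\dim M=\frac{1}{2}\dim T^*M$, giving the half-dimensionality required of a Lagrangian.

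Finally, the identity $\iota_N^*\vartheta_M=\varrho^*\mathrm{d}F$ is essentially the defining condition restricted from $W$ to $N$: for $p\in N$ the two one-forms on $W$ agree at $p$, so their further pullbacks along $N\hookrightarrow W$ agree, and since $\iota_N$ and $\varrho$ factor through this inclusion the claimed equality holds. Isotropy is then immediate, as $\iota_N^*\omega_M=\mathrm{d}(\iota_N^*\vartheta_M)=\mathrm{d}(\varrho^*\mathrm{d}F)=\varrho^*(\mathrm{d}\,\mathrm{d}F)=0$; together with the dimension count this shows $N$ is Lagrangian. I expect the only genuine subtlety to be the correct reading of the pointwise condition ``at $p$'' — recognising that, because the Liouville form only pairs $p_x$ with vectors tangent to $K$, the condition constrains exactly the part $p_x|_{T_xK}$ and leaves the conormal directions free; this is what makes $N$ half-dimensional and simultaneously delivers the affine-bundle structure underlying the smoothness, submersion, and connectedness claims.
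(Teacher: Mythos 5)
Your proof is correct. Note that the paper itself gives no argument for this proposition --- it is quoted from Sniatycki--Tulczyjew \cite[Section 3]{Tulczyjew1972} --- and your reduction to the canonical model $(T^*M,\vartheta_M)$, the identification of $N$ as the affine subbundle $\{(x,p_x): x\in K,\ p_x|_{T_xK}=(\mathrm{d}F)_x\}$ modelled on the conormal bundle of $K$, and the resulting dimension count plus the isotropy computation $\iota_N^*\mathrm{d}\vartheta=\mathrm{d}\varrho^*\mathrm{d}F=0$ are exactly the standard argument from that reference.
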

The Lagrangian submanifold $N$ in this proposition is called the Lagrangian submanifold {\em generated} by $F$, and $F$ is called a {\em generating function} of $N$.  
\begin{proposition}\label{P: gamma}
Let $(P, M, \pi, \vartheta)$ be a special symplectic manifold, $K$ a submanifold of $M$ and $N$ an isotropic submanifold of $(P, \mathrm{d}\vartheta)$ such that $K:=\pi(N)$ is a submanifold of $M$, the mapping $\varrho$ defined by the commutative diagram 
\begin{eqnarray}
\begin{tikzcd}[column sep=small]
N \arrow[hook]{rr}{\iota_N}\arrow{d}[swap]{\varrho} & &  P \arrow{d}{\pi}\\
K \arrow[hook]{rr}{\iota_K} & & M\nonumber
\end{tikzcd}
\end{eqnarray}
is a submersion and the fibers of $\varrho$ are connected. Then there is a unique closed differential form $\gamma$ on $K$ such that $\iota_N^*\vartheta = \varrho^*\gamma$. If $\gamma$ is exact and $\gamma=\mathrm{d}F$, then $N$ is  contained in the Lagrangian submanifold generated by $F$, and if $N$ is a Lagrangian submanifold generated by a function $F$, then $\mathrm{d}F =\gamma$.
\end{proposition}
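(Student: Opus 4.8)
The plan is to descend the one-form $\iota_N^*\vartheta$ through $\varrho$. Because $N$ is isotropic, $\mathrm d(\iota_N^*\vartheta)=\iota_N^*\,\mathrm d\vartheta=0$, so $\iota_N^*\vartheta$ is a \emph{closed} one-form on $N$. Since $\varrho\colon N\to K$ is a surjective submersion with connected fibres, the pullback $\varrho^*$ is injective and its image is exactly the \emph{basic} forms; hence it suffices to check that $\iota_N^*\vartheta$ is basic, i.e.\ that it is horizontal ($V\lrcorner(\iota_N^*\vartheta)=0$ for every $\varrho$-vertical vector field $V$) and $\varrho$-invariant ($\mathcal L_V(\iota_N^*\vartheta)=0$). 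Granting this, basic descent produces a unique $\gamma$ on $K$ with $\iota_N^*\vartheta=\varrho^*\gamma$, and $\gamma$ is automatically closed, since $\varrho^*\mathrm d\gamma=\mathrm d\,\iota_N^*\vartheta=0$ together with injectivity of $\varrho^*$ forces $\mathrm d\gamma=0$. This settles the existence and uniqueness assertion.

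The verification of the two basic-form conditions is the technical heart. For horizontality I would use the defining feature of a special symplectic manifold: via the diffeomorphism $\alpha\colon P\to T^*M$ one has $\vartheta=\alpha^*\vartheta_M$ and $\pi=\pi_M\circ\alpha$, and the Liouville form annihilates every $\pi_M$-vertical vector, since $\vartheta_M|_\xi(w)=\langle\xi,T\pi_M(w)\rangle$. Consequently $\vartheta$ annihilates every $\pi$-vertical vector. Now if $V$ is $\varrho$-vertical then $T\varrho\,(V)=0$, and differentiating $\pi\circ\iota_N=\iota_K\circ\varrho$ gives $T\pi\,(T\iota_N V)=T\iota_K(T\varrho\,V)=0$, so $T\iota_N V$ is $\pi$-vertical; hence $V\lrcorner(\iota_N^*\vartheta)=\vartheta(T\iota_N V)=0$. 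Invariance is then free: by Cartan's formula $\mathcal L_V(\iota_N^*\vartheta)=V\lrcorner\,\mathrm d(\iota_N^*\vartheta)+\mathrm d\,(V\lrcorner\,\iota_N^*\vartheta)=0$, the first term vanishing by isotropy and the second by horizontality. The \emph{connectedness} of the fibres of $\varrho$ is precisely what upgrades the locally defined descended form to a single global $\gamma$, so I expect this bookkeeping, rather than any hard computation, to be the only delicate point in the existence part.

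For the remaining two assertions I would work in the model $P=T^*M$, $\vartheta=\vartheta_M$, and first make the Lagrangian submanifold generated by $F$ explicit. A pointwise evaluation of the condition in the preceding generating-function proposition shows that at $\xi\in\pi_M^{-1}(K)$ over $x=\pi_M(\xi)$ the covector $\iota^*_{\pi^{-1}(K)}\vartheta_M-(\pi|_{\pi^{-1}(K)})^*\mathrm dF$ acts on $w$ by $\langle\,\xi|_{T_xK}-\mathrm dF|_x,\,T\pi_M(w)\rangle$; hence the generated Lagrangian submanifold is the affine subbundle $N_F=\{\xi:\xi|_{T_xK}=\mathrm dF|_x\}$. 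Assuming now $\gamma=\mathrm dF$, take any $\xi\in N$ with $x=\pi(\xi)$; from $\iota_N^*\vartheta=\varrho^*\gamma=\varrho^*\mathrm dF$ and $T\pi\circ T\iota_N=T\iota_K\circ T\varrho$ one gets $\langle\,\xi|_{T_xK}-\mathrm dF|_x,\,T\varrho\,v\,\rangle=0$ for all $v\in T_\xi N$. Because $\varrho$ is a submersion, $T\varrho$ maps onto $T_xK$, forcing $\xi|_{T_xK}=\mathrm dF|_x$, i.e.\ $\xi\in N_F$; thus $N\subseteq N_F$. This surjectivity of $T\varrho$, the submersion hypothesis, is what promotes an equality valid only along $TN$ to the pointwise covector condition cutting out $N_F$, and is the real obstacle in this part. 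Finally, if $N$ is itself the Lagrangian submanifold generated by some $F$, the generating-function proposition already records $\iota_N^*\vartheta=\varrho^*\mathrm dF$ (and guarantees the submersion and connected-fibre properties that make $\gamma$ defined); comparing with $\iota_N^*\vartheta=\varrho^*\gamma$ and invoking injectivity of $\varrho^*$ yields $\mathrm dF=\gamma$, completing the proof.
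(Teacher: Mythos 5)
Your proposal is correct. The paper itself offers no proof of this proposition --- it is quoted verbatim from Tulczyjew's reference \cite{Tulczyjew1972} --- so there is nothing internal to compare against, but your argument is the standard one and is complete: the Liouville form annihilates $\pi_M$-vertical vectors, hence $\vartheta=\alpha^*\vartheta_M$ annihilates $\pi$-vertical vectors, which together with $\mathrm{d}(\iota_N^*\vartheta)=\iota_N^*\mathrm{d}\vartheta=0$ makes $\iota_N^*\vartheta$ basic for the surjective submersion $\varrho$ with connected fibres, and the descended $\gamma$ is unique and closed by injectivity of $\varrho^*$. Your identification of the generated Lagrangian submanifold as $\{\xi:\ \xi|_{T_xK}=\mathrm{d}F|_x\}$ and the use of surjectivity of $T\varrho$ onto $T_xK$ to upgrade the identity $\iota_N^*\vartheta=\varrho^*\mathrm{d}F$ to the pointwise covector condition are exactly the right steps for the two final assertions.
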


\section{Tulczyjew triple and its magnetized version}\label{Tulczyjew triple}
We shall split the discussion of the Tulczyjew triple into two parts: the classical Tulczyjew triple and the magnetized Tulczyjew triple, though the former is a special case of the later. 

\subsection{Classical Tulczyjew triple}
Since $(T^*X, \omega_X)$ is a symplectic manifold, there is a vector bundle isomorphism $\beta_X$: $TT^*X\to T^*T^*X$ over $T^*X$. Let $\alpha_X=\kappa\circ \beta_X$, where $\kappa$: $T^*T^*X\to T^*TX$ is the canonical isomorphism reviewed in Section \ref{canonical isomorphism}, then we have {\em Tulczyjew triple}
\begin{eqnarray}
\begin{tikzcd}[column sep=small]
T^*T^*X && TT^*X \arrow{ll}[swap]{\beta_X}{\cong}  \arrow{rr}{\alpha_X}[swap]{\cong} & &  T^*TX .
\end{tikzcd}
\end{eqnarray} 
Let
\begin{eqnarray}
\vartheta_X^H:=\beta_X^*\vartheta_{T^*X}, \quad \vartheta_X^L:=\alpha_X^*\vartheta_{TX}.
\end{eqnarray}

Since $T^*X$ is a symplectic manifold, by Proposition \ref{A: HamitonianSSS}, we have a Hamiltonian special symplectic manifold $$\fbox{$S^H_X:=(TT^*X, T^*X, \tau_{T^*X}, \vartheta^H_X)$}$$ as usual. With the help of local formula \eqref{local: alpha0}, one can check that diagram
\begin{eqnarray}\label{D: commutativeSQ}
\begin{tikzcd}[column sep=small]
TT^*X \arrow{rr}{\alpha_X}\arrow{d}[swap]{T\pi_X} & &  T^*TX \arrow{d}{\pi_{TX}}\\
TX\arrow{rr}{1}& &TX
\end{tikzcd}
\end{eqnarray} is commutative, so quadruple $$\fbox{$S^L_X:=(TT^*X, TX, T\pi_X, \vartheta^L_X)$}$$ is a special symplectic manifold. We shall call $S_X^L$ a {\bf Lagrangian special symplectic manifold}.  

\subsubsection{Local formulae}
In Section \ref{canonical isomorphism}, if we take $E\to X$ to be $TX\to X$, then the canonical isomorphism $\kappa$: $T^*T^*X\to T^*TX$ has this local representation:
$$(q, p, \dot p, {\dot q})\mapsto (q, -{\dot q}, \dot p, p),$$ 
so we have
\begin{eqnarray}\label{local: alpha0}
 \fbox{$\beta_X(q, p, \dot q, \dot p) = (q, p, \dot p, -\dot q),\quad \alpha_X(q, p, \dot q, \dot p) = (q, \dot q, \dot p, p)$}
\end{eqnarray} in local representation.
Therefore, locally $\vartheta^H_X=\dot p_i\, {\mathrm d}q^i-\dot q^i\, {\mathrm d}p_i$, $\vartheta^L_X=\dot p_i\, {\mathrm d}q^i+p_i\, {\mathrm d}\dot q^i$. In more compact form, we have
\begin{eqnarray}
\fbox{$\vartheta^H_X=\dot p\cdot {\mathrm d}q-\dot q\cdot {\mathrm d}p, \quad \vartheta^L_X=\dot p\cdot {\mathrm d}q+p\cdot {\mathrm d}\dot q$}
\end{eqnarray} locally. In view of the fact that locally $\hat\vartheta_X=p\cdot\dot q$, we have
\begin{eqnarray}
\fbox{$\vartheta^L_X-\vartheta^H_X=\mathrm{d}\hat \vartheta_X$.}
\end{eqnarray}
Then \fbox{$\mathrm{d}\vartheta^L_X=\mathrm{d}\vartheta^H_X=:\Omega_X$}, so $\vartheta^L_X$ and $\vartheta^H_X$ yields the same symplectic form $\Omega_X$ on $TT^*X$. Note that locally
\begin{eqnarray}
\fbox{$\Omega_X=\mathrm{d}\dot p_i\wedge \mathrm{d}q^i+\mathrm{d}p_i\wedge \mathrm{d}\dot q^i$.}
\end{eqnarray}
In terms of operators $\mathrm{i}_T$ and $\mathrm{d}_T$ on page \pageref{tangent lift} for tangent bundle $TT^*X\to T^*X$, we have
\begin{eqnarray}
\fbox{$\hat \vartheta_X =\mathrm{i}_T\vartheta_X, \quad \vartheta_X^H =\mathrm{i}_T\omega_X, \quad \vartheta_X^L =\mathrm{d}_T\vartheta_X, \quad \Omega_X =\mathrm{d}_T\omega_X$}
\end{eqnarray}whose validity can be checked by local computations based on formulae on page \pageref{tangent lift}.

\vskip 10pt
In summary, there are two special symplectic structures on $TT^*X$ that underlie the symplectic structure $\Omega_X$ on $TT^*X$, the one that corresponds to $S^H_X$ is called the {\bf  Hamiltonian special symplectic structure}, and the one that corresponds to $S^L_X$ is called the {\bf Lagrangian special symplectic structure}. When we go from $T^*X$ to a generic symplectic manifold $M$, the Hamiltonian special symplectic structure still exists (on $TM$) as usual, but the Lagrangian special symplectic structure ceases to exist. A key observation of this article is that {\em the Lagrangian special symplectic structure still exist if we go from $T^*X$ to a Sternberg phase space.}

\subsection{Magnetized Tulczyjew triple}\label{MagnetizedTT}
In this subsection we shall assume that $G$ is a compact connected Lie group, $P\buildrel p\over\to X$ is a principal $G$-bundle with a fixed principal connection form $\Theta$, $\mathcal F\buildrel \rho \over \to X$ is the associated fiber bundle with fiber $F$ and the associated $G$-connection. We further assume that $F$ is a hamiltonian $G$-space with ($G$-equivariant) moment map $\Phi$.

Denote by $\mathcal F_\sharp$ and $\mathcal F^\sharp$ the manifolds defined in the pullback diagrams
\begin{eqnarray}
\begin{tikzcd}
\mathcal F_\sharp \arrow{r}{\widetilde{\tau_X}}\arrow{d}{\rho_\sharp} &\mathcal F \arrow{d}{\rho}\\
TX\arrow{r}{\tau_X} & X,
\end{tikzcd} \quad
\begin{tikzcd}
\mathcal F^\sharp \arrow{r}{\widetilde{\pi_X}}\arrow{d}{\rho^\sharp} &\mathcal F \arrow{d}{\rho}\\
T^*X\arrow{r}{\pi_X} & X
\end{tikzcd}
 \end{eqnarray} respectively. We note that the dual vector bundle of $\mathcal F_\sharp \buildrel {\widetilde{\tau_X}}\over \to \mathcal F$ is $\mathcal F^\sharp \buildrel {\widetilde{\pi_X}}\over \to \mathcal F$, so we have the canonical isomorphism
 $$
 \kappa: T^*\mathcal F^\sharp\to T^*\mathcal F_\sharp
 $$ as demonstrated in Section \ref{canonical isomorphism}. In view of the fact that Sternberg phase space $\mathcal F^\sharp$ is a symplectic manifold, we have a generalized Tulczyjew triple:
 \begin{eqnarray}\label{MTulczyjewTriple}
\begin{tikzcd}[column sep=small]
& T\mathcal F^\sharp \arrow{dr}{\alpha_{\mathscr F}} \arrow{dl}[swap]{\beta_{\mathscr F}}&\\
T^*\mathcal F^\sharp \arrow{rr}[swap]{\kappa}  & &  T^*\mathcal F_\sharp
\end{tikzcd}
\end{eqnarray} where isomorphism $\beta_{\mathscr F}$ comes from the symplectic structure on $\mathcal F^\sharp$ and $\alpha_{\mathscr F}:=\kappa\circ \beta_{\mathscr F}$. This generalized Tulczyjew triple shall be referred to as magnetized Tulczyjew triple.

Let
\begin{eqnarray}
 \fbox{$\vartheta^H_{\mathscr F}:=\beta_{\mathscr F}^*\vartheta_{\mathcal F^\sharp}$}, \quad 
 \fbox{$\vartheta^L_{\mathscr F}:=\alpha_{\mathscr F}^*\vartheta_{\mathcal F_\sharp}$} 
 \end{eqnarray}
 and $\hat \vartheta_X$ also denote the pullback of $\hat\vartheta_X$ under map $T\rho^\sharp$: $T\mathcal F^\sharp\to TT^*X$. Later we shall show that 
 \begin{eqnarray}\label{H=L}
\fbox{$\vartheta^L_{\mathscr F}-\vartheta^H_{\mathscr F}=\mathrm{d}\hat \vartheta_X$.}
\end{eqnarray}
Then $\mathrm{d}\vartheta^L_{\mathscr F}=\mathrm{d}\vartheta^H_{\mathscr F}=:\Omega_{\mathscr F}$, so $\vartheta^L_{\mathscr F}$ and $\vartheta^H_{\mathscr F}$ yield the same symplectic form $\Omega_{\mathscr F}$ on $T\mathcal F^\sharp$. We shall call $\vartheta^L_{\mathscr F}$ (resp. $\vartheta^H_{\mathscr F}$)  the {\bf Lagrangian} (resp. {\bf Hamiltonian}) {\bf Liouville form} on $T\mathcal F^\sharp$.

\subsubsection{Local formulae}
To get a local formula for $\beta_{\mathscr F}$, we need to get a local formula for the symplectic form on $\mathcal F^\sharp$. Since we work locally we may assume that $P\to X$ is $X\times G\to X$ and then the connection form
$\Theta$ is equal to $g^{-1}Ag +g^{-1}\, dg$ where $A$ is a $\mathfrak g$-valued differential one-form on $X$ and $g^{-1}\,dg$ is the Maurer-Cartan form on $G$. The Sternberg symplectic form $\omega_\Theta$ on $T^*X\times F$
is
$$
\omega_X+\Omega-{\mathrm d}\langle A, \Phi\rangle
$$
which, in local coordinates, is
$$
{\mathrm d}p_i\wedge {\mathrm d}q^i+{1\over 2}\Omega_{\alpha\beta}\, \mathrm{d}z^\alpha\wedge \mathrm{d}z^\beta-{1\over 2}\langle \partial_iA_j-\partial_jA_i, \Phi\rangle {\mathrm d}q^i\wedge {\mathrm d}q^j +\langle A_i, \partial_\alpha\Phi\rangle {\mathrm d}q^i\wedge \mathrm{d}z^\alpha.
$$
Then \underline{$\beta_{\mathscr F}$: $T\mathcal F^\sharp\to T^*\mathcal F^\sharp$ can be represented as}
 \begin{eqnarray}
(q, p, z, \dot q, \dot p, \dot z)&\mapsto& (q, p, z,\; \dot p_i-\langle \dot q^j(\partial_jA_i-\partial_iA_j), \Phi\rangle-\langle A_i, \dot z^\alpha\partial_\alpha\Phi\rangle,\cr
&& -\dot q,\; \dot z^\alpha \Omega_{\alpha\beta}+\langle \dot q^i A_i, \partial_\beta\Phi\rangle).
 \end{eqnarray}
Consequently \underline{$\alpha_{\mathscr F}$: $T\mathcal F^\sharp\to T^*\mathcal F_\sharp$ can be represented as}
  \begin{eqnarray}\label{local: alpha}
(q, p, z, \dot q, \dot p, \dot z)&\mapsto& (q, \dot q, z,\; \dot p_i-\langle \dot q^j(\partial_jA_i-\partial_iA_j), \Phi\rangle-\langle A_i, \dot z^\alpha\partial_\alpha\Phi\rangle,\cr
&& p_j,\; \dot z^\alpha \Omega_{\alpha\beta}+\langle \dot q^i A_i, \partial_\beta\Phi\rangle).
 \end{eqnarray}
 Then locally \underline{$\vartheta^H_{\mathscr F}$ is equal to}
 \begin{eqnarray}
(\dot p_i-\langle \dot q^j(\partial_jA_i-\partial_iA_j), \Phi\rangle-\langle A_i, \dot z^\alpha\partial_\alpha\Phi\rangle)\, {\mathrm d}q^i -\dot q^i\, {\mathrm d}p_i+(\dot z^\alpha \Omega_{\alpha\beta}+\langle \dot q^i A_i, \partial_\beta\Phi\rangle)\, \mathrm{d}z^\beta.\nonumber
\end{eqnarray}
Similarly, locally \underline{$\vartheta^L_{\mathscr F}$ is equal to}
\begin{eqnarray}
(\dot p_i-\langle \dot q^j(\partial_jA_i-\partial_iA_j), \Phi\rangle-\langle A_i, \dot z^\alpha\partial_\alpha\Phi\rangle)\,{\mathrm d}q^i +p_i\, \mathrm{d}\dot q^i+(\dot z^\alpha \Omega_{\alpha\beta}+\langle \dot q^i A_i, \partial_\beta\Phi\rangle)\,\mathrm{d}z^\beta.\nonumber
\end{eqnarray}
Since locally $\hat\vartheta_X=p_i\dot q^i$, identity \eqref{H=L} is verified, as promised. 

In terms of operators $\mathrm{i}_T$ and $\mathrm{d}_T$ on page \pageref{tangent lift} for tangent bundle $T\mathcal F^\sharp\to \mathcal F^\sharp$, we have
\begin{eqnarray}
\fbox{$\begin{array}{lll}
\vartheta_{\mathscr F}^H &= & \mathrm{i}_T\omega_\Theta, \\
\\
 \vartheta_{\mathscr F}^L &=  & \mathrm{d}_T\vartheta_X+\mathrm{i}_T \Omega_\Theta, \\
 \\
  \Omega_{\mathscr F}& = &\mathrm{d}_T\omega_\Theta 
\end{array}$}
\end{eqnarray}
whose validity can be checked by local computations based on formulae on page \pageref{tangent lift}.  Note that $\vartheta_X$ is really the pullback of $\vartheta_X$ under $\mathcal F^\sharp\to T^*X$ and $\Omega_\Theta$ is really the pullback of $\Omega_\Theta$ under $\mathcal F^\sharp\to \mathcal F$.

\subsubsection{Special symplectic structures on $T\mathcal F^\sharp$}\label{SpecialSymp}
Since $\mathcal F^\sharp$ is a symplectic manifold, by Proposition \ref{A: HamitonianSSS}, we have a {\bf Hamiltonian special symplectic manifold}
\[
\fbox{$S^H_{\mathscr F}:=(T{\mathcal F}^\sharp, {\mathcal F}^\sharp, \tau_{\mathcal F^\sharp}, \vartheta^H_{\mathscr F})$}
\] as usual.

Since diagram \eqref{D: commutativeSQ} is commutative, diagram
\begin{eqnarray}
\begin{tikzcd}[column sep=small]
T\mathcal F^\sharp \arrow{rr}{\tau_{\mathcal F^\sharp}}\arrow{d}[swap]{T\rho^\sharp} & &  \mathcal F^\sharp  \arrow{rr}{\widetilde{\tau_X}}\arrow{d}{\rho^\sharp}&& \mathcal F\arrow{d}{\rho}\\
TT^*X\arrow{rr}{\tau_{T^*X}}\arrow{d}[swap]{T\pi_X}& & T^*X \arrow{rr}{\pi_X}&& X\\
TX \arrow{urrrr}[swap]{\tau_X}&& && 
\end{tikzcd}\nonumber
\end{eqnarray}
is commutative, so we have a smooth map $T_\mathcal F$: $T\mathcal F^\sharp\to \mathcal F_\sharp$. Locally $T_{\mathcal F}$ can be represented as follows:
\begin{eqnarray}
(q, p, z, \dot q, \dot p, \dot z)&\mapsto& (q, \dot q, z).
 \end{eqnarray}
This local formulae, together with local formula \eqref{local: alpha}, implies that that diagram
\begin{eqnarray}
\begin{tikzcd}[column sep=small]
T\mathcal F^\sharp \arrow{rr}{\alpha_{\mathscr F}}\arrow{d}[swap]{T_{\mathcal F}} & &  T^*\mathcal F_\sharp \arrow{d}{\pi_{\mathcal F_\sharp}}\\
\mathcal F_\sharp\arrow{rr}{1}& & \mathcal F_{\sharp}\nonumber
\end{tikzcd}
\end{eqnarray}  is commutative. So quadruple 
$$\fbox{$S^L_{\mathscr F}:=(T{\mathcal F}^\sharp, {\mathcal F}_\sharp, T_{\mathcal F}, \vartheta^L_{\mathscr F})$}
$$ is a special symplectic manifold --- the {\bf Lagrangian special symplectic manifold}.  

\vskip 10pt
In summary, the magnetized Tulczyjew triple \eqref{MTulczyjewTriple} yields two special symplectic structures on $T\mathcal F^\sharp$, the one with Lagrangian Liouville one-form $\vartheta^L_{\mathscr F}$ is called the {\bf Lagrangian special symplectic structure} and the one with Hamiltonian Liouville one-form $\vartheta^H_{\mathscr F}$ is called the {\bf Hamiltonian special symplectic structure}.

\section{Tulczyjew's approach for particles in gauge fields}\label{Tulczyjew's approach}
Tulczyjew's approach to classical mechanics has two sides: the Hamiltonian side and the Lagrangian side, and the two sides are unified under a common geometric setting: the dynamics of a particle with configuration space $X$ is determined by a Lagrangian submanifold $D$ of $TT^*X$, whose description by its Hamiltonian $H$: $T^*X\to \mathbb R$ (resp. its Lagrangian $L$: $TX\to\mathbb R$) yields the Hamilton (resp. the Euler-Lagrange) equation. 

The discussion of Tulczyjew's approach shall be divided into two parts: the Hamiltonian formulation and the Lagrangian formulation.

\subsection{The Hamiltonian formulation}
Part of the reasons that Tulczyjew's approach works is the fact that $T^*X$ is a symplectic manifold. When we replace $T^*X$ by a symplectic manifold $(M, \omega)$, the Hamiltonian side still survives because Proposition \ref{A: HamitonianSSS} says that $(TM, M, \tau_M, \vartheta)$ is special symplectic manifold in which Tulczyjew has introduced \cite[page 249]{Tulczyjew1974}
\begin{definition}[Tulczyjew's Hamiltonian system]
A {\em Hamiltonian system} in special symplectic manifold $$(TM, M, \tau_M, \vartheta)$$ is a Lagrangian submanifold $N$ of $(TM, \mathrm{d}\vartheta)$ such that conditions for the existence of the unique form $\gamma$ stated in Proposition \ref{P: gamma} are satisfied and $\gamma$ is exact. The submanifold $K:=\tau_M(N)$ is called the {\em Hamiltonian constraint}, and a function $H$ on $K$ such that $\gamma=-\mathrm{d}H$, is called a {\em Hamiltonian} of $N$. 
\end{definition}
In particular, since $\mathcal F^\sharp$ is a special symplectic manifold, so we obtain Tulczyjew's Hamiltonian formulation for particles in gauge fields. 

To see how this is related to the ordinary Hamiltonian formulation for the dynamics of particles without constraint (i.e., $K=M$), one starts with a parametrized smooth curve $c$ on $M$. By taking derivative, we get a smooth parametrized curve $c'$ on $TM$.  Now the Hamilton equation for $c$ is nothing but the statement that $c'$ is a smooth parametrized curve on the Lagrangian submanifold $N$ of $TM$.  To verify this we just need to work locally, so we may assume that $(M, \omega)=(T^*\mathbb R^n, \omega_{\mathbb R^n})$ (Darboux's theorem), then $c=\left(q, p\right)$, so $
c'=\left(q, p, q', p'\right)$. Since the Lagrangian submanifold $N$ can be locally described as the set of points
$$
\left(q, p, {\partial H\over \partial p}, -{\partial H\over \partial q}\right),
$$ so $c'$ is a a smooth parametrized curve on $N$ means precisely that the Hamilton equation
\begin{eqnarray}
q'={\partial H\over \partial p}=\{q, H\}, \quad p'=-{\partial H\over \partial q}=\{p, H\}
\end{eqnarray} is satisfied.

\subsection{The Lagrangian formulation}
In this subsection we shall assume that $G$ is a compact connected Lie group, $P\buildrel p\over\to X$ is a principal $G$-bundle with a fixed principal connection form $\Theta$, $\mathcal F\buildrel \rho \over \to X$ is the associated fiber bundle with fiber $F$ and the associated $G$-connection. We further assume that $F$ is a hamiltonian $G$-space with ($G$-equivariant) moment map $\Phi$, symplectic form $\Omega$. Recall that $S^L_{\mathscr F}:=(T{\mathcal F}^\sharp, {\mathcal F}_\sharp, T_{\mathcal F}, \vartheta^L_{\mathscr F})$ is a special symplectic manifold. Mimicking Tulczyjew \cite[page 251]{Tulczyjew1974}, we have the following definition.

\begin{Definition}[Main Definition]
A {\em Lagrangian system} in $S^L_{\mathscr F}$ is a Lagrangian submanifold $N$ of $(T\mathcal F^\sharp, \mathrm{d}\vartheta^L_{\mathscr F})$ such that conditions for the existence of the unique form $\gamma$ stated in Proposition \ref{P: gamma} are satisfied and $\gamma$ is exact. The submanifold $J:= T_{\mathcal F}(N)$ is called the {\em Lagrangian constraint}, and a function $L$ on $J$ such that $\gamma=\mathrm{d}L$, is called a {\em Lagrangian} of $N$. 
\end{Definition}
Note that when $G$ is trivial and $F$ is a point, we have $S^L_{\mathscr F}=S^L_X$, then the above definition becomes Tulczyjew's definition 4.4 in Ref. \cite{Tulczyjew1974}.

\vskip 10pt
To get the Euler-Lagrange equation for the dynamics without constraint (i.e., $J=\mathcal F_\sharp$) , one starts with a parametrized smooth curve $\gamma$ on $\mathcal F$, then we get a parametrized smooth curve $(\gamma,(\rho\circ\gamma)')$ on $\mathcal F_\sharp$, so we arrive at a parametrized smooth curve $c$ on $T^*\mathcal F$ defined by the following commutative diagram
\begin{eqnarray}
\begin{tikzcd}[column sep=small]
& &  & & \mathbb R \arrow[bend right]{dllll}[description]{\gamma}\arrow{dll}[description]{(\gamma,(\rho\circ\gamma)')}\arrow[bend left, dashed]{drrrr}[description]{c}\arrow[dotted]{drr}[description]{c'}& & & & \\
\mathcal F && \mathcal F_\sharp\arrow{ll}[swap]{\widetilde{\tau_X}} \arrow{rr}{\mathrm{d}L}\arrow[bend right, dashed]{rrrrrr}[description]{\mathrm{Leg}_L}& &  T^*\mathcal F_\sharp \arrow{rr}[swap]{\alpha_{\mathscr F}^{-1}}{\cong}&& T\mathcal F^\sharp \arrow{rr}{\tau_{\mathcal F^\sharp }} && 
\mathcal F^\sharp.
\end{tikzcd}\nonumber
\end{eqnarray} 
By taking derivative, we get a smooth parametrized curve $c'$ on $T\mathcal F^\sharp $.  Now the Euler-Lagrange equation for $\gamma$ is nothing but the statement that $c'$ is  a smooth parametrized curve on the Lagrangian submanifold $N$ of $T\mathcal F^\sharp$. In terms of local coordinates, we can written $\gamma=(q,z)$. This really means that, locally we represent $\gamma$ by $(q, z)\circ \gamma$, but for notational sanity, $(q, z)\circ \gamma$ is also denoted by $(q, z)$. So $(q, z)$ is either a local function on $\mathcal F$ or a local function on $\mathbb R$, depending on the context. With this understood, we have $c=(q, {\partial L\over \partial \dot q}|_{\dot q = q'}, z)$, so
$$
c'=\left(q, \left.{\partial L\over \partial \dot q}\right|_{\dot q = q'}, z, q',  \left.{d\over dt}\left({\partial L\over \partial \dot q}\right|_{\dot q = q'}\right), z'\right). 
$$ 
Since the Lagrangian submanifold $N$ can be locally described as the set of points
$$
\left(q, {\partial L\over \partial \dot q}, z, \dot q, {\partial L\over \partial q^i}+\langle \dot q^j(\partial_jA_i-\partial_iA_j), \Phi \rangle+\langle A_i, \dot z^\beta\partial_\beta\Phi\rangle, \dot z^\alpha \right)
$$ where $\dot z^\alpha=\left(\partial_\beta L -\langle\dot q^iA_i, \partial_\beta \Phi\rangle\right)\Omega^{\beta\alpha}$ with $[\Omega^{\alpha\beta}]=[\Omega_{\alpha\beta}]^{-1}$.
Therefore, $c'$ is a a smooth parametrized curve on $N$ means that, locally
\begin{eqnarray}\label{local eqnofM}
\fbox{ \fontsize{7pt}{1em}\selectfont $
\begin{array}{rcl}
\displaystyle\frac{dz^\alpha}{dt} &=& \left(\displaystyle\left.{\partial L\over \partial z^\beta}\right|_{\dot q=q'} -\frac{dq^k}{dt}\left\langle A_k, \frac{\partial\Phi}{\partial z^\beta}\right\rangle\right)\Omega^{\beta\alpha}\\
\\
\displaystyle{d\over dt}\left(\left.{\partial L\over \partial \dot q^i}\right|_{\dot q= q'}\right) &= &\displaystyle\left.{\partial L\over \partial q^i}\right|_{\dot q=q'}+\frac{dq^j}{dt}\left\langle  \frac{\partial A_i}{\partial q^j}- \frac{\partial A_j}{\partial q^i}, \Phi\right\rangle+\frac{dz^\alpha}{dt}\left\langle A_i,  \frac{\partial\Phi}{\partial z^\alpha}\right\rangle
\end{array}
$}
\end{eqnarray}

\begin{Theorem}[Main Theorem]\label{main theorem} Assume the data in the beginning of this subsection and $L$:
$\mathcal F_\sharp\to \mathbb R$ is a smooth map.  For the unconstrained particle dynamics with configuration space $X$, internal space $F$, gauge field $\Theta$, and Lagrangian $L$, the following statements are true. 
\begin{enumerate}
\item The equation of motion can be locally written as Eq. \eqref{local eqnofM}. 

\item If $(x, y)$ is a Darboux (local) coordinate for $(F, \Omega)$ so that $\Omega=\mathrm{d}x\wedge \mathrm{d}y$ locally, then Eq. \eqref{local eqnofM} is the Euler-Lagrange equation for Lagrangian
\begin{eqnarray}\label{mLagrangian}
\mathcal L:= L-\langle\dot q^iA_i, \Phi\rangle+x\cdot\dot y.
\end{eqnarray}
\item The action functional for homologically trivial smooth loop $\gamma$: $\mathrm{S^1}\to \mathcal F$ is 
 \begin{eqnarray}\label{action}
S[\gamma]:=\displaystyle \int_{\mathrm{S}^1} (\gamma,(\rho\circ\gamma)')^*L\; dt+\int_{\Sigma} \tilde \gamma^*\Omega_\Theta
\end{eqnarray}
where $\Sigma$ is an oriented compact $2$-manifold with $\mathrm{S}^1=\partial \Sigma$, $\tilde \gamma$:  $\Sigma\to \mathcal F$ is a smooth extension of
$\gamma$, and $\Omega_\Theta$ is the Sternberg form on $\mathcal F$. 

\item There is a generalized version of Dirac's charge quantization condition:
 \begin{eqnarray}\label{quantization condition}
 \displaystyle \left[{\Omega_\Theta\over 2\pi }\right]\in \mathrm{Range}\left(H^2(\mathcal F; \mathbb Z)\buildrel\otimes_{\mathbb Z} 1\over \longrightarrow H^2(\mathcal F; \mathbb R)\right). 
 \end{eqnarray}
 I.e., the cohomology class represented by the closed real differential two-form ${\Omega_\Theta\over 2\pi }$ is an integral lattice point of the 2nd cohomology group of $\mathcal F$ with real coefficient. 
 
 \end{enumerate}
 \end{Theorem}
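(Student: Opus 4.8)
The plan is to treat the four parts in order, with parts (1) and (2) reducing to bookkeeping in local coordinates and parts (3) and (4) carrying the genuine content.

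For part (1), I would invoke the generating-function description of the Lagrangian submanifold attached to the special symplectic manifold $S^L_{\mathscr F}=(T\mathcal F^\sharp,\mathcal F_\sharp,T_{\mathcal F},\vartheta^L_{\mathscr F})$. In the unconstrained case $J=\mathcal F_\sharp$, Proposition \ref{P: gamma} together with $\vartheta^L_{\mathscr F}=\alpha_{\mathscr F}^*\vartheta_{\mathcal F_\sharp}$ identifies $N$ with $\alpha_{\mathscr F}^{-1}$ of the image of the section $\mathrm dL\colon\mathcal F_\sharp\to T^*\mathcal F_\sharp$. Inverting the explicit formula \eqref{local: alpha} for $\alpha_{\mathscr F}$ then produces exactly the local coordinate description of $N$ displayed just before the theorem; imposing that the lifted curve $c'$ lie on $N$ and reading off the last block of components yields \eqref{local eqnofM}. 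This step is essentially already carried out in the discussion preceding the statement, so here it is only a matter of matching components.

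For part (2), I would compute the ordinary Euler--Lagrange equations of \eqref{mLagrangian} directly. In Darboux coordinates $\Omega=\mathrm dx\wedge\mathrm dy$ the matrix $[\Omega_{\alpha\beta}]$ and its inverse $[\Omega^{\alpha\beta}]$ are the standard symplectic unit, so the $x$- and $y$-equations of $\mathcal L=L-\langle\dot q^iA_i,\Phi\rangle+x\,\dot y$ collapse to the single first line of \eqref{local eqnofM}, the term $x\,\dot y$ playing the role of the internal symplectic potential that makes $(x,y)$ a conjugate pair. The $q^i$-equation produces $\frac{d}{dt}\frac{\partial L}{\partial\dot q^i}$ on the left, while differentiating $-\langle A_i,\Phi\rangle$ along the curve and combining with the $\partial_{q^i}$ derivative of the gauge term assembles the field-strength pairing $\langle\partial_jA_i-\partial_iA_j,\Phi\rangle$ and the coupling $\langle A_i,\partial_\alpha\Phi\rangle\dot z^\alpha$; this is precisely the second line of \eqref{local eqnofM}. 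The only thing to check is that the antisymmetrisation works out, which it does because $\langle A_i,\Phi\rangle$ depends on $q$ through $A$ and on $z$ through $\Phi$.

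Part (3) is where the global geometry enters. Locally $\mathcal L\,\mathrm dt=L\,\mathrm dt+\lambda$ with $\lambda:=-\langle A_i,\Phi\rangle\,\mathrm dq^i+x\,\mathrm dy$, and a one-line computation gives $\mathrm d\lambda=\Omega-\mathrm d\langle A,\Phi\rangle=\Omega_\Theta$, so $\lambda$ is a local primitive of the Sternberg form. The essential point is that $\lambda$ is not globally defined on $\mathcal F$, precisely because $\Omega_\Theta$ is closed but generally not exact --- which is the whole point of Sternberg's construction. Hence I would not write the action as $\int_{\mathrm{S}^1}\gamma^*\lambda$; instead, using $\gamma=\partial\bigl(\tilde\gamma(\Sigma)\bigr)$ and Stokes, I replace $\int_{\mathrm{S}^1}\gamma^*\lambda$ by $\int_\Sigma\tilde\gamma^*\Omega_\Theta$, arriving at \eqref{action}. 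Finally I would verify that the first variation reproduces the dynamics: since $\mathrm d\Omega_\Theta=0$, the standard formula gives $\delta\int_\Sigma\tilde\gamma^*\Omega_\Theta=\int_{\mathrm{S}^1}\gamma^*(V\lrcorner\,\Omega_\Theta)$ for a variation field $V$, and adding $\delta\int_{\mathrm{S}^1}(\gamma,(\rho\circ\gamma)')^*L\,\mathrm dt$ returns \eqref{local eqnofM}.

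Part (4) then comes out as a consistency condition for the definition in part (3). Two choices of bounding data $(\Sigma,\tilde\gamma)$ differ, after gluing along $\gamma$, by an integral $2$-cycle $\sigma$ in $\mathcal F$, and the ambiguity in $S[\gamma]$ is $\int_\sigma\Omega_\Theta$. The classical equations of motion are blind to this, since only $\delta S$ matters, but well-definedness of the quantum amplitude $e^{\mathrm iS}$ forces $\int_\sigma\Omega_\Theta\in 2\pi\mathbb Z$ for every $\sigma\in H_2(\mathcal F;\mathbb Z)$; by de Rham duality and the universal coefficient theorem this is exactly \eqref{quantization condition}. I expect the main obstacle to be the global step in part (3): because the internal symplectic potential $x\,\mathrm dy$ and the gauge term $-\langle A_i,\Phi\rangle\,\mathrm dq^i$ do not assemble into a globally defined one-form on $\mathcal F$, the naive ``$\int\mathcal L\,\mathrm dt$'' is meaningless, and one must justify both that \eqref{action} is the correct globalization and that its variation is independent of the filling $\Sigma$; this same obstruction is what makes the quantization condition of part (4) unavoidable.
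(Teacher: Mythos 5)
Your proposal is correct and follows essentially the same route as the paper: part (1) is the coordinate matching already carried out before the theorem, part (2) is the ``short straightforward computation in calculus of variations,'' part (3) is the Stokes-theorem globalization of the local primitive $-\langle A_i,\Phi\rangle\,\mathrm{d}q^i+x\,\mathrm{d}y$ of $\Omega_\Theta$, and part (4) is the single-valuedness of $\exp[\mathrm{i}S]$ combined with the universal coefficient theorem. If anything, you supply more detail than the paper's three-sentence sketch, correctly identifying the non-exactness of $\Omega_\Theta$ as the reason both for the surface-integral form of the action and for the quantization condition.
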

\begin{proof}
Lagrangian $\mathcal L$ in Eq. \eqref{mLagrangian} emerges out of a short straightforward computation in calculus of variations. Action $S$ in Eq. \eqref{action} is obtained from $\mathcal L$ with the help of Stokes' theorem in calculus and the fact that in dimension one the oriented bordism group coincides with the integral homology group. To arrive at the charge quantization condition \eqref{quantization condition} we demand the uniqueness of $\exp[iS[\gamma]]$, and use the fact that in dimension two the oriented bordism group coincides with the integral homology group as well as the universal coefficient theorem in algebraic topology. 
\end{proof}
It is equally easy to write down the equation of motion locally for a Lagrangian system with a generic Lagrangian constraint. 
\begin{remark}
The generalized Dirac quantization condition \eqref{quantization condition} is equivalent to the condition that the Sternberg phase is prequantizable, i.e., $[\omega_\Theta]\in H^2(\mathcal F^\sharp; \mathbb R)$ is integral. That is because $\mathcal F^\sharp$ and $\mathcal F$ are homotopy equivalent, and $\omega_X$ is an exact differential 2-form.
\end{remark}
\begin{remark}
The main theorem in particular tells us how to incorporate a Yang-Mills field in the Lagrangian formulation: just shift the ordinary Lagrangian $L(q, \dot q)$ by two terms: the Lorentz term $-\langle\dot q^iA_i(q), \Phi(y,x)\rangle$ and the new extra term $x\cdot\dot y$.  Note that $x\cdot\dot y$ can be replaced by $-y\cdot\dot x$ or ${1\over 2} (x\cdot\dot y-\dot y\cdot x)$, etc.
\end{remark}
For simplicity we shall write ${d\over dt}\left(\left.{\partial L\over \partial \dot q^i}\right|_{\dot q= q'}\right) $ as ${d\over dt}\left({\partial L\over \partial \dot q^i}\right)$, $\left.{\partial L\over \partial q^i}\right|_{\dot q=q'}$ as ${\partial L\over \partial q^i}$. Let
\begin{eqnarray}
\begin{array}{ll}
\partial_FL :=  \frac{\partial L}{\partial z^\alpha}\; \mathrm{d}z^\alpha, &\{f, g\}_F:=\Omega^{\alpha\beta}\frac{\partial f}{\partial z^\alpha}\frac{\partial g}{\partial z^\beta}\\
\\
 \frac{Dz}{dt} :=\frac{dz^\alpha}{dt}\frac{\partial}{\partial z^\alpha}+Y_{q'  \cdot A}, & F_{ji} :=\frac{\partial A_i}{\partial q^j}- \frac{\partial A_j}{\partial q^i}+[A_i, A_j].
 \end{array}
 \nonumber
\end{eqnarray}
With the help of Eqs. \eqref{phi1} and \eqref{phi2}, we have
\begin{corollary}
 Equation \eqref{local eqnofM} becomes
\begin{eqnarray} \label{eqnofM}
\fbox{$
\begin{array}{rcl}
\displaystyle{d\over dt}\left({\partial L\over \partial \dot q^i}\right) &= &\displaystyle{\partial L\over \partial q^i}+\frac{dq^j}{dt}\langle  F_{ji}, \Phi\rangle+\left\{L, \langle A_i, \Phi\rangle \right\}_F\\
\\
\displaystyle\frac{Dz}{dt}\lrcorner\, \Omega &=& \partial_FL
\end{array}
$}
\end{eqnarray}
provided that $F$ is a \underline{homogeneous} Hamiltonian $G$-space.
\end{corollary}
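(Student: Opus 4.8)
The plan is to treat the Corollary as a purely local restatement: I would show that the two equations in \eqref{eqnofM} are just \eqref{local eqnofM} rewritten in the intrinsic quantities $\partial_FL$, $\{\,,\}_F$, $\frac{Dz}{dt}$ and $F_{ji}$, the only analytic inputs being the moment-map identities \eqref{phi1} and \eqref{phi2}. So the whole argument is an exercise in substituting definitions and contracting indices, and I would organize it as two independent reductions, one per equation.

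For the fiber equation (second line of \eqref{eqnofM}) I would start from $\frac{Dz}{dt}=\frac{dz^\alpha}{dt}\frac{\partial}{\partial z^\alpha}+Y_{q'\cdot A}$ and compute $\frac{Dz}{dt}\lrcorner\,\Omega$ directly. The vertical part contributes $\Omega_{\alpha\beta}\frac{dz^\alpha}{dt}\,\mathrm{d}z^\beta$, while the connection part $Y_{q'\cdot A}\lrcorner\,\Omega$ equals $\frac{dq^k}{dt}\langle A_k,\mathrm{d}\Phi\rangle$ by \eqref{phi1}. Feeding the first line of \eqref{local eqnofM} in for $\frac{dz^\alpha}{dt}$ and using $[\Omega^{\alpha\beta}]=[\Omega_{\alpha\beta}]^{-1}$, the $\langle A_k,\partial_\beta\Phi\rangle$-terms cancel exactly, leaving $\frac{\partial L}{\partial z^\beta}\,\mathrm{d}z^\beta=\partial_FL$. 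The content here is simply that the connection term built into the covariant velocity is precisely what absorbs the inhomogeneous $\langle A_k,\partial_\beta\Phi\rangle$ piece of \eqref{local eqnofM}.

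For the horizontal equation (first line of \eqref{eqnofM}) I would compare it termwise with the first line of \eqref{local eqnofM}. The abelian field-strength piece $\langle\partial_jA_i-\partial_iA_j,\Phi\rangle$ is already common to both, so it suffices to produce the commutator completion $\frac{dq^j}{dt}\langle[A_i,A_j],\Phi\rangle$ together with $\{L,\langle A_i,\Phi\rangle\}_F$ out of the single term $\frac{dz^\alpha}{dt}\langle A_i,\partial_\alpha\Phi\rangle$. Substituting the $\dot z$-equation again splits this into two: the $\partial_\gamma L$ contribution, after contracting with $\Omega^{\gamma\alpha}$, is exactly $\{L,\langle A_i,\Phi\rangle\}_F$; the remaining term is a contraction $\langle A_k,\partial_\gamma\Phi\rangle\,\Omega^{\gamma\alpha}\,\langle A_i,\partial_\alpha\Phi\rangle$, which I recognize as the Poisson bracket $\{\langle A_k,\Phi\rangle,\langle A_i,\Phi\rangle\}_F$ of two components of the moment map. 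By equivariance this bracket is $\pm\langle[A_k,A_i],\Phi\rangle$, i.e. \eqref{phi2} applied to $\Phi$, and thus reconstitutes the missing $[A_i,A_j]$ term in $F_{ji}$.

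The main obstacle I anticipate is bookkeeping rather than conceptual: getting the signs and index placement right in the step that turns $\langle A_k,\partial_\gamma\Phi\rangle\,\Omega^{\gamma\alpha}\,\langle A_i,\partial_\alpha\Phi\rangle$ into $\langle[A_i,A_j],\Phi\rangle$, since here the antisymmetry of $\Omega^{\alpha\beta}$, the sign convention in \eqref{phi2}, and the relabeling $k\to j$ all interact. The hypothesis that $F$ be \emph{homogeneous} is what I would invoke at the end: on a homogeneous Hamiltonian $G$-space the fundamental vector fields $Y_\xi$ span every tangent space, so $\frac{Dz}{dt}$ can be written as $Y_\zeta$ for a time-dependent $\zeta\in\mathfrak g$, which is exactly what lets the fiber equation be read as the covariant transport (Wong) equation for the color charge $\Phi$ and makes \eqref{eqnofM} close on the moment-map variables.
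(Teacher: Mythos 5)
Your proposal is correct and follows essentially the same route as the paper, which justifies the corollary only by the phrase ``with the help of Eqs.~\eqref{phi1} and \eqref{phi2}'', i.e.\ exactly the substitution of the $\dot z$-equation into \eqref{local eqnofM}, the identification of $\{L,\langle A_i,\Phi\rangle\}_F$ and of the moment-map bracket $\{\langle A_j,\Phi\rangle,\langle A_i,\Phi\rangle\}_F=\langle[A_j,A_i],\Phi\rangle$ producing the commutator completion of $F_{ji}$, and the cancellation of the $\langle A_k,\partial_\beta\Phi\rangle$ terms against $Y_{q'\cdot A}\lrcorner\,\Omega$ in the fiber equation. The only care needed is the sign bookkeeping you already flag, so the argument stands as written.
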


\begin{example}[Dynamics of electrically charged particles]
When $G=\mathrm{U}(1)$, $F$ is a co-adjoint orbit of $G$ (hence a point $-q_{\mathrm e}\in \mathbb R$) with $\Phi$ being the inclusion map, the second equation becomes $0=0$, so the equation of motion becomes the more familiar equation 
\begin{eqnarray}\label{abeliancase}
\displaystyle{d\over dt}\left({\partial L\over \partial \dot q}\right)  =\displaystyle {\partial L\over \partial q}-q_{\mathrm e} \; q'\lrcorner F.
\end{eqnarray} In particular, if $X=\mathbb R^3$, $L={1\over 2}\dot {\mathbf r}\cdot \dot {\mathbf r}$ and $\mathbf B=F_{23}\,\mathbf i+F_{31}\,\mathbf j+ F_{12}\,\mathbf k$, Eq. \eqref{abeliancase} becomes
the textbook equation of motion 
$$
\mathbf r'' =q_{\mathrm e}\; \mathbf r'\times \mathbf B
$$ with $m=1$ and $c=1$.
In case $X=\mathbb R^3 \setminus \{\mathbf 0\}$, and $\mathbf B=q_{\mathrm m}{\mathbf r\over r^3}$, condition \eqref{quantization condition} becomes Dirac's  {\em charge quantization condition} \cite{Dirac1931}: 
$$\displaystyle q_{\mathrm e}\,q_{\mathrm m}\in {1\over 2}\mathbb Z$$
with $\hbar =1$ and $c=1$.
\end{example}

\begin{example}[Wong's equations \cite{Wong70}]\label{Wong's equations} With the data in this subsection, we further assume that $X$ is a Lorentzian manifold with Lorentz metric $g$, $L={1\over 2}g_{\mu\nu}\dot q^\mu\dot q^\nu$, $F$ is a co-adjoint orbit of $G$ and $\Phi$ is the inclusion map. Let $p_\mu=g_{\mu\nu}\dot q^\nu$, $\Phi=-\xi_a \hat T^a$ where $\hat T^a$'s form a basis for $\frak g^*$, then the two equations in Eq. \eqref{eqnofM} become Wong's equations \cite[equations (1a) and (1b)]{Montgomery84} constrained to the Sternberg phase space $\mathcal F^\sharp$. Since $\mathcal F^\sharp$ is a symplectic leaf of Wong's phase space (a Poisson manifold) and solutions to Wong's equations are always constrained to symplectic leaves of Wong's phase space \cite[Theorem 2]{Montgomery84},  we effectively arrive at Wong's equations on Wong's phase space, for which a Lagrangian was constructed in Ref. \cite[equation (2.6)] {Balachandran1977}.

\end{example}

\begin{example}[Magnetized Kepler problems \cite{meng2013}] Let $k\ge 1$ be an integer and $\mu$ be a real number, we consider the $(2k+1)$-dimensional magnetized Kepler problem with magnetic charge $\mu$. Here, $X=\mathbb R^{2k+1}\setminus\{\mathbf 0\}$, $G=\mathrm{SO}(2k)$, 
$P\to X$ is the pullback of the principal $G$-bundle of $\mathrm{SO}(2k+1)\to \mathrm{S}^{2k}$ under map $\mathbf r\mapsto \frac{\mathbf r}{|\mathbf r|}$, $\Theta$ is the pullback of the canonical invariant connection on $\mathrm{SO}(2k+1)\to \mathrm{S}^{2k}$, $F$ is the co-adjoint orbit $\mathcal O_\mu$ of $G$ and $\Phi$ is just the inclusion map; see Ref. \cite{meng2013} for more details. The Hamiltonian is 
$$
H={1\over 2}\mathbf p\cdot \mathbf p-{1\over r}+{\mu^2/2k\over r^2},
$$ the pullback of a real function on $T^*X$, and the Lagrangian is
$$
L={1\over 2}\dot {\mathbf r}\cdot \dot {\mathbf r}+{1\over r}-{\mu^2/2k\over r^2},
$$ the pullback of a real function on $TX$. It is clear that $H$ and $L$ are related by the Legendre transformation. If $k>1$, the generalized Dirac quantization condition \eqref{quantization condition} is equivalent to the condition that the co-adjoint orbit $\mathcal O_\mu$ is prequantizable.
\end{example}

\begin{remark}
In the geometric approach here the Lagrangian takes simpler form $L$ and the equation of motion takes complicated form. In the textbook's approach to dynamics of electrically charged particles, the Lagrangian takes complicated form $\mathcal L$ and the equation of motion is still the simple-looking Euler-Lagrange equation. A similar remark is valid for the Hamiltonian approaches. ( Sternberg's Hamiltonian approach is our geometric approach to unconstrained Hamiltonian systems.) As usual, the geometric approach captures the essence of the problem, hence works much more generally.
\end{remark}
\begin{remark}
The dynamics of a charged particle is a Lagrangian submanifold of the symplectic manifold $T\mathcal F^\sharp$, and the two distinct special symplectic structures on $T\mathcal F^\sharp$ correspond to the two formulations of dynamics, one is Hamiltonian and one is Lagrangian. Since a Lagrangian submanifold of the symplectic manifold $T\mathcal F^\sharp$ may not come from a real function on either $\mathcal F^\sharp$ or $\mathcal F_\sharp$, this geometric approach goes much beyond what Sternberg's Hamiltonian approach can handle.
\end{remark}

Let us conclude this subsection with the following diagram:
$$
\begin{tikzcd}[row sep=scriptsize, column sep=scriptsize]
T^*_N\mathcal F^\sharp\arrow[hook]{r}{}\arrow[two heads]{d} &   T^*\mathcal F^\sharp  \arrow{dr}[description]{\pi_{\mathcal F^\sharp}}& & T\mathcal F^\sharp  \arrow{ll}[description]{\beta_{\mathscr F}} \arrow{rr}[description]{\alpha_{\mathscr F}}\arrow{dl}[description]{\tau_{\mathcal F^\sharp}}\arrow{dr}[description]{T_{\mathcal F}}&& T^*\mathcal F_\sharp\arrow{dl}[description]{\pi_{\mathcal F_\sharp}} &T^*_J \mathcal F_\sharp\arrow[hook]{l}\arrow[two heads]{d}\\
T^*N \arrow{r}{\pi_N}& N \arrow[hook]{r}\arrow{d}{H}\arrow[bend left]{l}[description]{-\mathrm{d}H}& \mathcal F^\sharp\arrow{dr}[description]{\widetilde{\pi^*}} & & \mathcal F_\sharp\arrow{dl}[description]{\tilde \pi} & J \arrow[hook]{l}\arrow{d}[swap]{L}\arrow[bend right]{r}[description]{\mathrm{d}L}& T^*J\arrow{l}[swap]{\pi_J}\\
&\mathbb R&&\mathcal F&&\mathbb R&\\
&&&\mathbb R\arrow{u}{\gamma}\arrow{uur}[swap]{(\gamma, (\rho\circ\gamma)')}&&&
\end{tikzcd}
$$

\subsection{The Legendre transformation}
For completeness we give a sketch of the Legendre transformation for the experts. The general readers who wish to know more details should consult sections 5 and 6 in Ref. \cite{Tulczyjew1974}.
\begin{definition}[The Legendre transformation]
The identity symplectic diffeomorphism $1_{T\mathcal F^\sharp}$ from $S_{\mathscr F}^L$ to $S_{\mathscr F}^H$ is called the Legendre transformation, and the identity symplectic diffeomorphism $1_{T\mathcal F^\sharp}$ from $S_{\mathscr F}^H$ to $S_{\mathscr F}^L$ is called the inverse Legendre transformation.
\end{definition}
Since $\vartheta_{\mathscr F}^L-\vartheta_{\mathscr F}^H=\mathrm{d}\hat \vartheta_X$, we have 
\begin{proposition} The Legendre transformation is generated by $-\varPhi_{\mathscr F}$ where 
$$\varPhi_{\mathscr F}:\quad \mathcal F_\sharp\times_{\mathcal F}\mathcal F^\sharp\to \mathbb R$$ is the map that sends $((q, \dot q, z), (q, p, z) )$ to $\langle \dot q, p\rangle$. The inverse Legendre transformation is generated by 
$$\tilde\varPhi_{\mathscr F}:\quad \mathcal F^\sharp\times_{\mathcal F}\mathcal F_\sharp\to \mathbb R$$ is the map that sends $((q, p, z), (q, \dot q, z) )$ to $\langle \dot q, p\rangle$.
\end{proposition}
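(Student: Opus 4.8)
The plan is to treat the Legendre transformation the way one treats any symplectomorphism between special symplectic manifolds: through its graph. By Proposition \ref{A: HamitonianSSS}(2) the product $S^H_{\mathscr F}\times S^L_{\mathscr F}$ carries the special symplectic structure
\[
\left(T\mathcal F^\sharp\times T\mathcal F^\sharp,\; \mathcal F^\sharp\times \mathcal F_\sharp,\; \tau_{\mathcal F^\sharp}\times T_{\mathcal F},\; \vartheta^H_{\mathscr F}-\vartheta^L_{\mathscr F}\right),
\]
whose underlying symplectic form is $\mathrm{pr}_1^*\Omega_{\mathscr F}-\mathrm{pr}_2^*\Omega_{\mathscr F}$. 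Since the Legendre transformation is the identity $1_{T\mathcal F^\sharp}$ and since $\mathrm{d}\vartheta^H_{\mathscr F}=\mathrm{d}\vartheta^L_{\mathscr F}=\Omega_{\mathscr F}$, its graph is the diagonal $\Delta\subset T\mathcal F^\sharp\times T\mathcal F^\sharp$, on which the product symplectic form restricts to $\Omega_{\mathscr F}-\Omega_{\mathscr F}=0$; as $\dim\Delta$ is half of $\dim(T\mathcal F^\sharp\times T\mathcal F^\sharp)$, the diagonal is a Lagrangian submanifold. I would then read off the generating function by applying Proposition \ref{P: gamma} to $N=\Delta$.

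The core computation runs as follows. Projecting $\Delta$ by $\tau_{\mathcal F^\sharp}\times T_{\mathcal F}$ sends a vector $v=(q,p,z,\dot q,\dot p,\dot z)$ to the pair $\bigl((q,p,z),(q,\dot q,z)\bigr)$; since the two components share the same base point in $\mathcal F$, the image $K$ is exactly the fiber product, which I would identify (up to reordering the factors to match the statement) with $\mathcal F_\sharp\times_{\mathcal F}\mathcal F^\sharp$. The induced projection $\varrho\colon\Delta\to K$ forgets the fiber directions $(\dot p,\dot z)$, so it is a surjective submersion with connected affine fibers and the hypotheses of Proposition \ref{P: gamma} are met. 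Restricting the product Liouville form to $\Delta$ and using identity \eqref{H=L} gives
\[
\iota_\Delta^*\bigl(\vartheta^H_{\mathscr F}-\vartheta^L_{\mathscr F}\bigr)=\vartheta^H_{\mathscr F}-\vartheta^L_{\mathscr F}=-\mathrm{d}\hat\vartheta_X .
\]
It then remains to recognize $\hat\vartheta_X$ as the pullback along $\varrho$ of the intrinsic pairing. Because $\hat\vartheta_X$, the pullback of $\mathrm{i}_T\vartheta_X$, is locally $p_i\dot q^i$ and depends only on the cotangent datum $p$ and the tangent datum $\dot q$, it descends along $\varrho$ to the function $\langle\dot q,p\rangle=\varPhi_{\mathscr F}$ on $K$. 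Hence the unique closed form $\gamma$ of Proposition \ref{P: gamma} is $-\mathrm{d}\varPhi_{\mathscr F}$, exhibiting $-\varPhi_{\mathscr F}$ as a generating function of the Legendre transformation.

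For the inverse Legendre transformation I would run the identical argument with the two factors, and hence the overall sign, interchanged: the product Liouville form becomes $\vartheta^L_{\mathscr F}-\vartheta^H_{\mathscr F}=+\mathrm{d}\hat\vartheta_X$, the base is $\mathcal F^\sharp\times_{\mathcal F}\mathcal F_\sharp$, and the same pairing now descends to $\tilde\varPhi_{\mathscr F}$, which is therefore a generating function carrying the opposite sign. This simultaneously explains the sign discrepancy between the two halves of the statement, since the two transformations are mutually inverse.

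I expect the main obstacle to be bookkeeping rather than anything conceptual. The two substantive points to nail down are: (i) verifying the regularity hypotheses of Proposition \ref{P: gamma} once one leaves the local chart, namely that $K$ is a genuine submanifold and that $\varrho$ is a submersion with connected fibers; and (ii) checking that the only-locally-defined expression $p_i\dot q^i$ for $\hat\vartheta_X$ really is the pullback of the globally defined, coordinate-free pairing $\varPhi_{\mathscr F}$ on the fiber product, i.e.\ that $\mathrm{i}_T\vartheta_X$ factors through $(\tau_{\mathcal F^\sharp},T_{\mathcal F})$. Keeping the sign and factor-order conventions consistent across the two assertions is the remaining delicate point.
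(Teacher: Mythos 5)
Your proposal is correct and follows exactly the route the paper intends: the paper offers only the one-line justification ``Since $\vartheta_{\mathscr F}^L-\vartheta_{\mathscr F}^H=\mathrm{d}\hat \vartheta_X$, we have\dots'', and your argument is precisely the standard expansion of that remark via the product special symplectic structure of Proposition \ref{A: HamitonianSSS}(2), the diagonal as the graph of $1_{T\mathcal F^\sharp}$, and the descent of $\hat\vartheta_X=p_i\dot q^i$ to the pairing on the fiber product. The sign bookkeeping in both halves matches the statement.
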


If $f$ is a function of $x$ with a unique critical point $x_0$, we use $\mathrm{Stat}_x [f(x)]$ to denote $f(x_0)$.
\begin{example}
Let $N$ be a Lagrangian submanifold of $T\mathcal F^\sharp$ which is a generated by a map $L$:  $\mathcal F_\sharp\to \mathbb R$ and also a map $-H$: $\mathcal F^\sharp\to\mathbb R$. Then $L$ and $H$ are related by $H(y)=\mathrm{Stat}_x[\varPhi(x,y) - L(x)]$ subject to constraint $\widetilde{\tau_X}(x)=\widetilde{\pi_X}(y)$, and $L(x)=\mathrm{Stat}_y[\tilde\varPhi(y,x) - H(x)]$ subject to the same constraint. In terms of local coordinates, we have $H(q, p, z)=p\cdot \dot q - L(q, \dot q, z)$ evaluated at $\dot q$ such that $\frac{\partial L}{\partial \dot q^i}=p_i$, and $L(q, \dot q, z)=p\cdot \dot q - H(q, p, z)$ evaluated at $p$ such that $\frac{\partial H}{\partial p_i}=q^i$.

\end{example}

\appendix

 \section{Useful facts}\label{A: fact}

\subsection{Tangent lift \cite{pidello1987derivation}}\label{tangent lift}
Let $X$ be a smooth manifold and $TX$ be its total tangent space. The tangent lift, denoted by $\mathrm{d}_T$, is a map the sends a differential $k$-form on $X$ to a differential $k$-form on $TX$. By definition,
\begin{eqnarray}
\mathrm{d}_T=\mathrm{d}\circ\mathrm{i}_T+\mathrm{i}_T\circ\mathrm{d}
\end{eqnarray}
where $\mathrm{i}_T$ is a map that sends a differential $(k+1)$-form on $X$ to a differential $k$-form on $TX$ as follows: for $\alpha\in \Omega^{k+1}(X)$, $\mathrm{i}_T(\alpha)\in\Omega^k(TX)$ is defined via equation
\begin{eqnarray}
\mathrm{i}_T(\alpha)|_v =\tilde v\lrcorner \left.\tau_X^*\alpha\right |_v \quad\mbox{for any $v\in TX$}.
\end{eqnarray} Here $\tilde v\in T_vTX$ is any horizontal lift of $v\in T_{\tau_X(v)}X$.
For simplicity, we use the same symbol for a differential form on $X$ and its pullback on $TX$. Then, in terms of local coordinate $q$ and $(q, \dot q)$ on $X$ and $TX$, we have
\begin{eqnarray}
\mathrm{i}_T\left(\alpha_{i_0\cdots i_k}\, \mathrm{d}q^{i_0}\wedge\cdots\wedge \mathrm{d}q^{i_k}\right) =\sum_{j=0}^k(-1)^j\dot q^{i_j}\alpha_{i_0\cdots i_k}\, \mathrm{d}q^{i_0}\wedge\cdots\widehat{\mathrm{d}q^{i_j}}\cdots\wedge \mathrm{d}q^{i_k}\nonumber
\end{eqnarray} where the hat on $\mathrm{d}q^{i_j}$ means that $\mathrm{d}q^{i_j}$ is missing.

\subsection{``Cotangent map" $T^*_f$}
Let $f$: $Y\to X$ be a smooth map. While there is a canonical morphism from the tangent bundle of $Y$ to the tangent bundle of $X$, there is no canonical morphism from the cotangent bundle of $Y$ to the cotangent bundle of $X$. However, if $f$: $Y\to X$ is a fiber bundle with a (Ehresmann) connection, then, for any $y\in Y$, if $H_y$ is the horizontal subspace of $T_yY$, we have $T_{f(y)}X\cong H_y \subset T_yY$, so we have a linear map  $T_y^*Y\to T_{f(y)}^*X$ which, upon being globalized, becomes the top arrow $T^*_f$ in the commutative square
\begin{eqnarray}
\begin{tikzcd}
T^*Y \arrow{r}{T^*_f}\arrow{d}[swap]{\pi_Y} &T^*X \arrow{d}{\pi_X}\\
Y\arrow{r}{f}& X\nonumber
\end{tikzcd} 
\end{eqnarray} which is a morphism from the cotangent bundle of $Y$ to the cotangent bundle of $X$. 
\subsubsection{Local formula for $T^*_f$}
Here we assume that $f$: $Y\to X$ is the fiber bundle with fiber $F$, associated to the principal $G$-bundle $P\buildrel p\over \to X$ with a principal connection, so  $f$: $Y\to X$ has an associated $G$-connection. To work out a local formula for $T^*_f$: $T^*Y\to T^*X$, we may assume that $X$ is diffeomorphic to $\mathbb R^n$, $P\buildrel p\over \to X$ is trivial, and $F$ is diffeomorphic to $\mathbb R^l$. Upon fixing a diffeomorphism $q$: $X\to \mathbb R^n$,  a diffeomorphism $z$: $F\to \mathbb R^l$, and a local trivialization
\begin{eqnarray}
\begin{tikzcd}[column sep=small]
P \arrow{dr}[swap]{p} & &  X\times G \arrow{dl}{p_1}\arrow{ll}[swap]{\phi}{\cong}\\
& X & \nonumber
\end{tikzcd}
\end{eqnarray}
we have a diffeomorphism $Y\cong \mathbb R^n\times\mathbb R^l$ which shall be denoted by $(q, z)$. We also have diffeomorphisms
$$
T^*Y\cong \mathbb R^n\times \mathbb R^l\times (\mathbb R^n\times \mathbb R^l)^* \cong (\mathbb R^n\times \mathbb R^l)^2 \quad\mbox{and} \quad T^*X\cong \mathbb R^n\times (\mathbb R^n)^*\cong  (\mathbb R^n)^2$$
and they shall be denoted by $(q, z, p, y)$ and $(q,p)$ respectively. Finally, the infinitesimal action of $\mathfrak g$ on $F$ assigns a vector field $Y_\xi$ on $F$ to each $\xi\in \mathfrak g$. Let us also use $Y_\xi$ to denote the coordinator vector of $Y_\xi$ with respect to the local tangent frame $\frac{\partial}{\partial z^\alpha}$.

Under the trivialization $\phi$, the principal connection on $P\buildrel p\over\to X$ is represented by a $\mathfrak{g}$-valued differential one-form $A\cdot \mathrm{d}q :=A_i\,\mathrm{d}q^i $ on $X$, so the horizontal tangent vector of $Y$ can be represented by $$(q, z,  \dot q, - Y_{\dot q\cdot A}).$$
Therefore, the map $T^*Y\buildrel T^*_f\over\to T^*X$ can be represented by
\begin{eqnarray}\label{localT^*_f}
\fbox{$(q, z, p, y)\mapsto \left (q^i, p_j- y\cdot Y_{A_j}\right)$.}
\end{eqnarray}

\newpage
\section{A list of symbols}\label{A: notations}
\begin{eqnarray}
\begin{array}{ll}
\lrcorner & \quad\mbox{the interior product of vectors with forms}\cr
\wedge & \quad\mbox{the wedge product of forms}\cr
{\mathrm d} & \quad\mbox{the exterior derivative operator}\cr
{\mathrm d}_T & \quad\mbox{the tangent lift operator}\cr
\tau_X:\, TX\to X &\quad\mbox{the tangent bundle projection}\cr
\pi_X:\, T^*X\to X &\quad\mbox{the cotangent bundle projection}\cr
\vartheta_X, \omega_X: ={\mathrm d}\vartheta_X &\quad \mbox{the Liouville form, the symplectic form on $T^*X$}\cr
\hat\vartheta_X: TT^*X\to \mathbb R &\quad\mbox{the map such that $\hat\vartheta_X|_{T_\alpha (T^*X)}=\vartheta_X|_\alpha$}\cr
f^* & \quad\mbox{the pullback on differential forms under smooth map $f$ } \cr
Tf & \quad\mbox{the tangent map of $f$} \cr
T_mf & \quad\mbox{the linearization of $f$ at point $m$} \cr
T^*_f & \quad\mbox{see appendix \ref{A: fact} for definition} \cr
q, (q, \dot q), (q, p)&\quad\mbox{local coordinate map on $X$, $TX$, $T^*X$}\cr
\pi: E\to X&\quad \mbox{a real vector bundle over $X$}\cr 
p_E: T^*E\to E^*&\quad \mbox{see the explanation right before diagram \eqref{doublevb}}\cr
(q, u), (q, \alpha), (q, \hat u) &\quad\mbox{local coordinate map on $E$, $E^*$, $E^{**}$}\cr
q(t) & \quad\mbox{a smooth parametrized curve on $X$}\cr
q'(t) & \quad\mbox{the derivative of $q(t)$, it is a smooth parametrized curve on $TX$}\cr
\iota:\, E\to E^{**} & \quad\mbox{the usual identification}\cr
G & \quad\mbox{a compact connected Lie group, viewed as a} \cr
 & \quad\mbox{Lie subgroup of a rotation group} \cr
\mathfrak g, \mathfrak g^* & \quad\mbox{the Lie algebra of $G$ and its dual} \cr
\xi & \quad\mbox{an element in $\mathfrak g$}\cr
\langle\, ,\,\rangle & \quad\mbox{the paring of elements in vector space $V$ with elements in $V^*$}\cr
\mathrm{Ad}_a & \quad\mbox{the adjoint action of $a\in G$ on $\mathfrak g$}\cr
P\to X & \quad\mbox{a principal $G$-bundle} \cr
\Theta & \quad\mbox{a $\mathfrak g$-valued differential one-form on $P$ that}\cr 
&\quad\mbox{defines a principal connection on $P\to X$} \cr
R_a & \quad\mbox{the right action on $P$ by $a\in G$}\cr
X_\xi & \quad\mbox{the vector field on $P$ which represents the}\cr
&\quad\mbox{infinitesimal right action on $P$ by $\xi\in \mathfrak g$}\cr
\phi: X\times G\to P & \quad\mbox{a local trivialization of $P\to X$}\cr
A_\phi\;\mbox{or simply}\; A & \quad\mbox{a $\mathfrak g$-valued differential one-form on $X$ which}\cr
&\quad\mbox{locally represents $\Theta$ under trivialization $\phi$}\cr
F & \quad\mbox{a hamiltonian $G$-space}\cr
\mathcal F:=P\times_G F& \quad\mbox{$P\times F$ quotient by equivalence relation $(p, f)\sim(R_{a^{-1}}(p), L_a(f))$}\cr
z, (z, y)&\quad \mbox{local coordinate map on $F$, $T^*F$}\cr
\Phi: F\to \mathfrak g^* & \quad\mbox{the $G$-equivariant moment map}\cr
L_a & \quad\mbox{the left action on $F$ by $a\in G$}\cr
Y_\xi & \quad\mbox{the vector field on $F$ which represents the infinitesimal left action}\cr
&\quad\mbox{on $F$ by $\xi\in \mathfrak g$. It also denotes the coordinate vector of $Y_\xi$}\cr
\mathcal F \buildrel\rho\over\to X & \quad\mbox{the associated fiber bundle with fiber $F$}\cr
\mathcal F^\sharp  \buildrel\rho^\sharp\over\to T^*X, \mathcal F_\sharp  \buildrel\rho_\sharp\over\to TX& \quad\mbox{the pullback bundle of $\mathcal F \buildrel\rho\over\to X$ under $\pi_X$, $\tau_X$}\cr
\mbox{one forms $\vartheta^L_{\mathscr F}$ and $\vartheta^H_{\mathscr F}$} &\quad \mbox{see Subsection \ref{SpecialSymp} for their definition}
\end{array}\nonumber
\end{eqnarray}

\end{document}